\newtheorem{theorem}{Theorem}
\newtheorem{proposition}{Proposition}
\begin{document}
\setlength{\baselineskip}{6.5mm}

\begin{titlepage}
 \begin{normalsize}
  \begin{flushright}
UT-Komaba/16-8\\
KOBE-TH-16-07\\
EHIME-TH-101
 \end{flushright}
 \end{normalsize}
\vspace*{10mm}
 \begin{LARGE}
   \vspace{1cm}
   \begin{center}
Non-renormalization theorem in a lattice supersymmetric theory
and the cyclic Leibniz rule
   \end{center}
 \end{LARGE}
  \vspace{5mm}
 \begin{center}
    Mitsuhiro {\sc Kato}$^{*}$, 
            \hspace{3mm}
    Makoto {\sc Sakamoto}$^{\dagger}$ 
            \hspace{3mm}and\hspace{3mm}
    Hiroto {\sc So}$^{\ddagger}$ 
\\
      \vspace{4mm}
        ${}^{*}${\sl Institute of Physics, University of Tokyo, }\\
        {\sl Komaba, Meguro-ku, Tokyo 153-8902, Japan}\\
      \vspace{4mm}
        ${}^{\dagger}${\sl Department of Physics, Kobe University, } \\
        {\sl Nada-ku, Hyogo  657-8501, Japan}\\
      \vspace{4mm}
        ${}^{\ddagger}${\sl Department of Physics,  Ehime University,} \\
        {\sl Bunkyou-chou 2-5, Matsuyama 790-8577, Japan}\\
      \vspace{1cm}

  ABSTRACT\par
 \end{center}
 \begin{quote}
  \begin{normalsize}
$N=4$ supersymmetric quantum mechanical model is formulated on the lattice. Two supercharges, among four, are exactly conserved with the help of the cyclic Leibniz rule without spoiling the locality.
In use of the cohomological argument, any possible local terms of the effective action are classified into two categories which we call type-I and type-II, analogous to the D- and F-terms in the supersymmetric field theories.  
We prove non-renormalization theorem on the type-II terms which include mass and interaction terms
with keeping  a lattice constant finite, while type-I terms such as the kinetic terms have nontrivial quantum  corrections. 
\end{normalsize}
 \end{quote}

\end{titlepage}
\vfil\eject


\section{Introduction}

Supersymmetry is not only a fascinating idea for solving the gauge hierarchy problem, but also has many interesting properties worth investigating in its own light, one of which is nonrenormalization of $F$-terms~\cite{G-S-R,Seiberg} in the Wess-Zumino model.
In order to investigate such non-trivial properties by non-perturbative means, a lattice formulation has been desired for a long time~\cite{Giedt-Review,C-K-Review}. It is, however, difficult due to the lack of the Leibniz rule of finite difference operators on the lattice~\cite{KSS-1,LAT2012}.

In our previous paper, we proposed a novel approach where we use a finite difference operator and field products which satisfy {\it cyclic Leibniz rule} (CLR) instead of ordinary {\it Leibniz rule} (LR)~\cite{KSS-2,LAT2013,LAT2014}.
Both rules are coincide with each other in the continuum limit, i.e. they reduce to the Leibniz rule of differential operator.
With the CLR approach we realized a lattice quantum mechanical model in which kinetic and interaction terms are supersymmetric invariant separately, so that we succeeded to apply localization technique to obtain some exact results of the model. An important point we should emphasize here is that there indeed exist concrete sets of finite difference operator and field product which satisfy locality, translational invariance and the CLR. A systematic method for getting a general solution of the CLR is recently found by Kadoh and Ukita~\cite{Kado-Ukita}.

In the present paper, we proceed further to realize more supersymmetries by our approach.
We consider $N=4$ supersymmetric quantum mechanical model which is obtained by dimensional reduction of $N=2$ Wess-Zumino model in two dimensions. We will construct its lattice version which exactly preserve two supersymmetry among four thanks to the CLR. As far as our knowledge is concerned, this is the first example which exactly realize two independent supersymmetries in the lattice model\footnote{%
It is known that a Nicolai map leads to an exact nilpotent supercharge (see Ref.\cite{C-K-Review}). It is easy to construct a single Nicolai map but it is hard to find a second one. In fact, other approaches without the CLR have not succeeded in constructing two Nicolai maps or two exact supercharges for the model considered in the manuscript. On the other hand, the CLR is nothing but a necessary condition for two exact nilpotent supercharges to exist in the model. This is a reason why we require the CLR for the model.
(If we abandon the locality, we could realize a full supersymmetry on lattice. We have however restricted our considerations to a class of lattice models keeping the locality because non-local lattice models
might cause trouble in continuum limit.)
}. 
Two exactly realized supersymmetric charges satisfy maximal nilpotent subalgebra, and surprisingly are sufficient to lead us to the non-renormalization theorem in the lattice model.

Let us note that if we use Wilson term to avoid doubling problem, the term must also keep the same number of supersymmetries. In our approach Wilson term as well as ordinary mass term is constructed in use of the CLR, so that they keep all the necessary supersymmetries and are protected by the non-renormalization theorem from quantum corrections.

The paper is organized as follows. In the next section, we first construct  a lattice quantum mechanical model with two exact supersymmetries with the help of the CLR. There we introduce a kind of superfields which are slightly different from conventional ones, but useful for the subsequent discussions such as non-renormalization theorem. We then discuss on the one-particle irreducible effective action including quantum correction. Since our model contains all the necessary auxiliary fields, supersymmetries are realized off-shell and the fields are transformed linearly. Thus supersymmetry transformations of the field variables in the effective action have the same form as the elementary field variables in the tree action. This fact means that the difference operator itself appeared in the transformation has no quantum correction.
Since the realized symmetry is a maximal nilpotent subalgebra of the $N=4$ supersymmetry, we utilize cohomological argument for the classification of the supersymmetric invariant terms which is discussed in section 3. There are two categories which we call type-I and type-II each of which is an analogue of D- and F-term in the Wess-Zumino model. In section 4, we will see that this model is not a trivial one by showing one-loop quantum corrections for the type-I terms such as the kinetic term.
In section 5, we prove non-renormalization theorem for the type-II terms without taking continuum limit. Section 6 is devoted to the summary and discussion. Some useful definitions and formulae are given in appendices.

\section{$N=4$ supersymmetric complex quantum mechanics}
\subsection{Supersymmetric transformations and superfields}

We consider a lattice version of complex quantum mechanical model with $N=4$ supersymmetry.
Our model has four sets of complex fundamental lattice fields\footnote{We use a term ``field" for a dynamical variable of the model because it can be regarded as 0+1 dimensional field theory.} (i.e. eight real degrees of freedom):%
\begin{equation}
\chi_{\pm n},~\bar{\chi}_{\pm n},~\phi_{\pm n},~F_{\pm n} ,
\label{orgfd}
\end{equation}
\noindent
where $\phi_{\pm}, F_{\pm}~(\chi_{\pm}, \bar{\chi}_{\pm})$ are bosonic (fermionic) variables, and 
$n$ stands for a lattice site.  The  lattice constant is set to be unity for brevity. 
An inner product between  these fields is defined as $A\cdot B=\langle A,B \rangle \equiv \sum_nA_nB_n$.

\begin{figure}[htbp] 
\begin{center}
\includegraphics[scale=0.8]{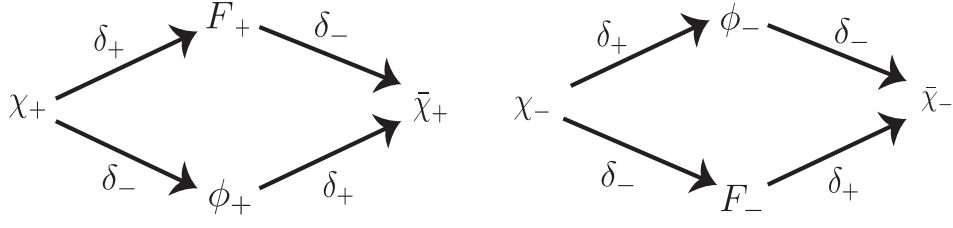}
\caption{$\delta_{\pm}$ transformation for fields}
\label{fig1-1}
\end{center}
\end{figure}         

These lattice fields are transformed under two  supersymmetries  $\delta_{\pm}$  as follows: 
\begin{eqnarray}
\left\{
 \begin{array}{l}
   \delta_{+}\phi_{+n}=\bar{\chi}_{+n}\,,\\
   \delta_{+}\bar{\chi}_{+n}=0\,,\\   
   \delta_{+}\chi_{+n}=F_{+n}\,,\\
   \delta_{+}F_{+n}=0\,,\\    
    \delta_{+}\phi_{-n}=0\,,\\
   \delta_{+}\bar{\chi}_{-n}=0\,,\\   
   \delta_{+}\chi_{-n}=-i(\nabla\phi_{-})_{n}\,,\\
   \delta_{+}F_{-n}=-i(\nabla\bar{\chi}_{-})_n\,,\\
\end{array} \right.
\qquad\qquad
\left\{
 \begin{array}{l}
   \delta_{-}\phi_{-n}=-\bar{\chi}_{-n}\,,\\
    \delta_{-}\bar{\chi}_{-n}=0 \,,\\
    \delta_{-}\chi_{-n}=F_{-n}\,,\\
    \delta_{-}F_{-n}=0 \,,\\
    \delta_{-}\phi_{+n}=0\,,\\
    \delta_{-}\bar{\chi}_{+n}=0 \,,\\
    \delta_{-}\chi_{+n}=i(\nabla\phi_{+})_n\,,\\
    \delta_{-}F_{+n}=-i(\nabla\bar{\chi}_{+})_n\,,\\   
 
 \end{array} \right.
\label{SUSYtransformation1}
\end{eqnarray}
\noindent
where $\nabla$ is a local difference operator on a field such as $(\nabla \phi)_m\equiv \sum_n\nabla_{mn}\phi_n$. 
These transformations are diagrammatically depicted 
 in  figure \ref{fig1-1}.  As we will see later, 
(\ref{SUSYtransformation1})   is exactly realized  in our  lattice theory. The remaining transformations  
in full $N=4$ SUSY are  
\begin{eqnarray}
\left\{
 \begin{array}{l}
   \bar{\delta}_{+}\phi_{+n}= 0\,,\\
   \bar{\delta}_{+}\bar{\chi}_{+n}= -i(\nabla\phi_+)_n\,,\\   
   \bar{\delta}_{+}\chi_{+n}= 0\,,\\
   \bar{\delta}_{+}F_{+n}= -i(\nabla\chi_+)_n\,,\\    
   \bar{\delta}_{+}\phi_{-n}= \chi_{-n}\,,\\  
   \bar{\delta}_{+}\bar{\chi}_{-n}= F_{-n} \,,\\   
   \bar{\delta}_{+}\chi_{-n}=0 \,,\\
   \bar{\delta}_{+}F_{-n}= 0 \,,\\
 \end{array} \right.
\qquad\qquad
\left\{
 \begin{array}{l}
   \bar{\delta}_{-}\phi_{-n}= 0\,,\\
   \bar{\delta}_{-}\bar{\chi}_{-n}= i(\nabla\phi_-)_{n} \,,\\
   \bar{\delta}_{-}\chi_{-n}= 0\,,\\
   \bar{\delta}_{-}F_{-n}=-i(\nabla\chi_-)_{n}  \,,\\
   \bar{\delta}_{-}\phi_{+n}= -\chi_{+n}\,,\\
   \bar{\delta}_{-}\bar{\chi}_{+n}= F_{+n} \,, \\
   \bar{\delta}_{-}\chi_{+n}= 0\,,\\
   \bar{\delta}_{-}F_{+n}= 0\,.\\   
   \end{array} \right.
\label{SUSYtransformation2}
\end{eqnarray}
\noindent
The SUSY transformations $\delta_{\pm},\bar{\delta}_{\pm}$ are realized by the following differential operators, 
\begin{eqnarray}
Q_+ & \equiv & \bar{\chi}_+\cdot\frac{\partial}{\partial {\phi_+}}
+{F}_+\cdot\frac{\partial}{\partial {\chi_+}}
-i\nabla \phi_-\cdot\frac{\partial}{\partial {\chi_-}}
-i\nabla \bar{\chi}_-\cdot\frac{\partial}{\partial {F_-}} ,\nonumber \\
Q_- & \equiv & i \nabla{\phi}_+\cdot\frac{\partial}{\partial {\chi_+}}
-i \nabla \bar{\chi}_+\cdot\frac{\partial}{\partial {F_+}}
- \bar{\chi}_-\cdot\frac{\partial}{\partial {\phi_-}}
+ F_-\cdot\frac{\partial}{\partial {\chi_-}},  \nonumber \\
\bar{Q}_+& \equiv & -i\nabla \phi_+\cdot\frac{\partial}{\partial {\bar{\chi}_+}}
-i\nabla \chi_+\cdot\frac{\partial}{\partial {F_+}}
+{\chi}_-\cdot\frac{\partial}{\partial {\phi_-}}
+F_-\cdot\frac{\partial}{\partial {\bar{\chi}_-}} ,\nonumber \\
\bar{Q}_-& \equiv &
-{\chi}_+\cdot\frac{\partial}{\partial {\phi_+}}
+F_+\cdot\frac{\partial}{\partial {\bar{\chi}_+}} 
+ i\nabla \phi_-\cdot\frac{\partial}{\partial {\bar{\chi}_-}}
-i\nabla \chi_-\cdot\frac{\partial}{\partial {F_-}} .
\label{SUSYtrans-2}
\end{eqnarray}
\noindent
From (\ref{SUSYtrans-2}), 
we find the following anti-commutation relations as a part of the $N=4$ SUSY algebra, 
\begin{equation}
\{Q_{\pm},\bar{Q}_{\mp}\}=\{Q_{\pm},Q_{\pm}\}=\{Q_{\pm},Q_{\mp}\}
= \{\bar{Q}_{\pm},\bar{Q}_{\pm}\}=\{\bar{Q}_{\pm},\bar{Q}_{\mp}\}=0 ,
\label{fullalgebra}
\end{equation}
which could be realized on lattice. The last piece of the algebra is given by  
\begin{equation}
\{Q_{\pm},\bar{Q}_{\pm}\}=P,\quad{\rm where}~iP \equiv \sum_{f=\phi_{\pm},F_{\pm},\chi_{\pm},\bar{\chi}_{\pm}}(\nabla f)\cdot \frac{\partial}{\partial f}. 
\label{fullalgebra2}
\end{equation}
\noindent
The problem is that $P$ cannot be realized as an exact symmetry on the lattice, because
it does not satisfy the relation $PX_n = -i(\nabla X)_n$ for general composite fields $X_n$. 
This comes from the fact that the Leibniz rule of a finite difference operator cannot be realized on the lattice 
due to the no-go theorem~\cite{KSS-1,LAT2012}.  
Therefore we must try to realize only a part of nilpotent subalgebra (\ref{fullalgebra}). A possible maximal set of supercharges is either $(Q_+,Q_-)$  or $(\bar Q_+,\bar Q_-)$.\footnote{There are other choices $(Q_+,\bar Q_-)$ and $(\bar Q_+,Q_-)$ whose algebraic structures are the same as in the main text. Their realizations, however, are different and the following arguments do not apply straightforwardly.}

For concreteness we take 
\begin{equation}
Q_+^2=Q_-^2=\{Q_+,Q_-\}=0,
\label{nilpotentSUSY}
\end{equation}
\noindent
as a maximal nilpotent subalgebra to be realized in our model. From here on we call this nilpotent-SUSY.
Our task is now to find functionals consisting of fundamental fields which satisfy
\begin{equation}
Q_{\pm}{\cal O}=0 .
\label{SUSYinv}
\end{equation}
To this end, it is helpful to examine cohomology of the nilpotent supercharges as will be seen shortly.

For later convenience, we assign a $U(1)$ charge $\pm 1$ to the fields with $\pm$ index respectively.
In addition to  this , 
we assign another $U(1)$ charge (we call it $U(1)_R$) which is defined by the eigenvalue of each field to the operator
\begin{eqnarray}
R\equiv \phi_+\cdot \frac{\partial}{\partial \phi_+} - F_+\cdot \frac{\partial}{\partial F_+}
-\phi_-\cdot \frac{\partial}{\partial \phi_-}+F_-\cdot \frac{\partial}{\partial F_-} .
\label{U(1)_R}
\end{eqnarray}
\noindent
In the holomorphy-like argument in later section, we will assign these $U(1)$ and $U(1)_R$ charges also to the parameters in the action such as coupling constants as well as mass parameter and the Wilson coefficient (a coefficient of the Wilson term). 
The quantum numbers (fermion number $N_F$, $U(1)$ and $U(1)_R$ charges)
 for fields are summarized in Table~\ref{table1}.  
 
\begin{table}[htp]
\caption{Quantum numbers for lattice fields}
\begin{center}
\begin{tabular}{|c|c|c|c|c|}\hline
 & $\phi_{\pm}$ & $\bar{\chi}_{\pm}$ & $\chi_{\pm}$ & $F_{\pm}$  \\
 \hline
 $N_F$ &0& $-1$ & 1 & 0 \\ \hline
$U(1)$ & $\pm 1$& $\pm 1$  & $ \pm 1$ &  $\pm 1$ \\  \hline
$U(1)_R$ &  $\pm 1$&$0$ &  $0$ &  $\mp 1$  \\
 \hline
\end{tabular}
\end{center}
\label{table1}
\end{table}%

We define following additional operators
\begin{eqnarray}
K_+&=&\phi_+\cdot \frac{\partial}{\partial \bar{\chi}_+}+ \chi_+\cdot \frac{\partial}{\partial F_+} ,
\nonumber \\
K_-&=&-\phi_-\cdot \frac{\partial}{\partial \bar{\chi}_-}+ \chi_-\cdot \frac{\partial}{\partial F_-} ,
\label{operator-def-K}
\end{eqnarray}
\noindent
and 
\begin{eqnarray}
N_+&\equiv&\phi_+\cdot \frac{\partial}{\partial {\phi}_+}+ \bar{\chi}_+\cdot \frac{\partial}{\partial \bar{\chi}_+}
+ \chi_+\cdot \frac{\partial}{\partial {\chi}_+}+ {F}_+\cdot \frac{\partial}{\partial {F}_+} ,
\nonumber \\
N_-&\equiv&\phi_-\cdot \frac{\partial}{\partial {\phi}_-}+ \bar{\chi}_-\cdot \frac{\partial}{\partial \bar{\chi}_-}
+ \chi_-\cdot \frac{\partial}{\partial {\chi}_-}+ {F}_-\cdot \frac{\partial}{\partial {F}_-} .
\label{operator-def-N}
\end{eqnarray}
\noindent
The $U(1)$ charge defined before is equivalent to the eigenvalue of the operator $N_+-N_-$.
These operators and supercharges $Q_{\pm}$ satisfy the following algebra
\begin{eqnarray}
&\{Q_{\pm},K_{\pm}\}=N_{\pm},~ [Q_{\pm},R]=\pm Q_{\pm},~
[K_{\pm},R]=\mp K_{\pm}, \nonumber \\
& \left[Q_{\pm},N_{\pm}\right]=[Q_{\pm},N_{\mp}]=[R,N_{\pm}]=[N_{\pm},N_{\mp}]=
[K_{\pm},N_{\pm}]=[K_{\pm},N_{\mp}]=0, \nonumber \\
&\{Q_{\pm},K_{\mp}\}=\{K_+,K_-\}=
Q_{\pm}^2=K_{\pm}^2=0, 
\label{QKRalgebra}
\end{eqnarray}
\noindent
This algebra (\ref{QKRalgebra}) is useful to analyze a cohomology of nilpotent SUSY, because $K_{\pm}$ plays an analogous role of homotopy operator.

Let us consider a monomial ${\cal O}_{\bm k}$ consists of the fundamental fields, where ${\bm k}$ stands for a collection of lattice site of each field in the monomial, e.g.~${\cal O}_{n,m,l}=\chi_{n}\phi_{m}\phi_{l}$.
And also consider a linear sum of a monomial over its lattice sites ${\cal O}=\sum_{\bm k} C_{\bm k} {\cal O}_{\bm k}$. If the coefficient $C_{\bm k}$ satisfies the following two conditions, then we say the $C_{\bm k}$ is  {\it translationally invariant local coefficient} or simply TILC.
\begin{enumerate}
\item {\it Translational invariance}:
$C_{\bm k}$ is invariant under discrete translation of lattice site, i.e. $C_{m_0+1,m_1+1,\cdots, m_I+1}=C_{m_0,m_1,\cdots, m_I}$. Then it is a function only of the site differences,
\begin{equation}
C_{m_0,m_1,\cdots,m_I} =  C(m_0-m_1,m_0-m_2,\cdots,m_0-m_{I}).
\label{coefficient}
\end{equation}
\item {\it Locality}: For each index (site difference) $k_\ell$, if $k_\ell$ is large enough, then there exist $M, L>0$ such that
\begin{equation}
|C(k_1,k_2,\cdots,k_I)| < L \exp (- M |k_{\ell}|) 
\label{locality-1}
\end{equation}
\end{enumerate}

If $C$ is TILC, then we can define holomorphic function
\begin{equation}
\tilde{C}(z_1,z_2,\cdots,z_I) \equiv  \sum_{\bm{k}}  C(k_1,k_2,\cdots,k_I)
z_1^{k_1} z_2^{k_2}\cdots z_I^{k_I}
\end{equation}
\noindent
which is holomorphic in an $I$-dimensional complex domain,
\begin{equation}
{\cal \bm{D}}^I=\{ 1-\epsilon_1 < |z_1| < 1+\epsilon_1,  1-\epsilon_2 < |z_2| < 1+\epsilon_2,\cdots ,
  1-\epsilon_I < |z_I| < 1+\epsilon_I | \epsilon_i >0\},
\end{equation}
with $\epsilon_{i} < M\ (i=1,2,\cdots,I)$.
We call $\tilde C$ holomorphic-representation or simply H-representation of $C$.

As for the fields, we introduce ``superfields" which are slightly different from conventional ones in the continuum theory, but correspond to a certain rearrangement of the original fields. With the Grassmann variables $\theta_{\pm}$, we define eight superfields $\Phi_{\pm}$, $\Psi_{\pm}$, $\Upsilon_{\pm}$ and $S_{\pm}$:
\begin{eqnarray}
\Phi_{\pm n} & \equiv & \phi_{\pm n}  \pm \theta_{\pm} \bar{\chi}_{\pm n},\qquad\Upsilon_{\pm n} \equiv F_{\pm n} -i \theta_{\mp} (\nabla \bar{\chi}_{\pm})_n, \nonumber \\ 
\Psi_{\pm n} & \equiv &\chi_{\pm n}+\theta_{\pm}\{F_{\pm n}-i\theta_{\mp}(\nabla\bar{\chi}_{\pm})_n\} \pm i \theta_{\mp} (\nabla \phi_{\pm})_n,
\qquad S_{\pm n} \equiv \bar{\chi}_{\pm n}.
\end{eqnarray}
\noindent
Then nilpotent SUSY transformations $\delta_{\pm}$ have simple expressions on the superfields, namely derivatives with respect to $\theta_{\pm}$:
\begin{equation}
\delta_{\pm}\Xi_n=\frac{\partial}{\partial \theta_{\pm}}\Xi_n,\qquad{\rm where}
\qquad \Xi_n=\Phi_{\pm n},\Psi_{\pm n},\Upsilon_{\pm n},S_{\pm n}.
\label{SFSUSY}
\end{equation}
\noindent
Figure~\ref{fig2} depicted diagrammatically these transformations of the superfields.

\begin{figure}[htbp] 
\begin{center}
\includegraphics[scale=0.8]{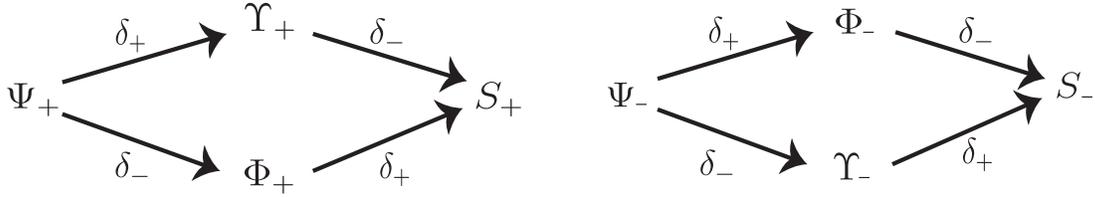}
\caption{$\delta_{\pm}$ transformation for superfields}
\label{fig2}
\end{center}
\end{figure}         

For any functional ${\cal O} $ made of the above superfields, the nilpotent supersymmetry transformations act as
\begin{equation}
Q_{\pm}{\cal O}=\frac{\partial}{\partial \theta_{\pm}}{\cal O} .
\label{SUSYSF}
\end{equation}
\noindent
Furthermore, any functional  $O$ of original fields (\ref{orgfd}) can be expressed in terms of that of superfields  
\begin{equation}
O = \int d^2\theta \theta_+\theta_- {\cal O} ={\cal O}\vert_{\theta_{\pm}=0} ,
\label{intrepSF}
\end{equation}
where we used the following integral formulae for the Grassmann variables $\theta_{\pm}$, 
\begin{equation}
\int d^2\theta \equiv \int d\theta_-d\theta_+,
\quad\int d\theta_+\theta_+=\int d\theta_-\theta_-=1,
\quad\int d^2\theta \theta_+\theta_-=1 .
\label{integralformula}
\end{equation}
\noindent
Note that our superfields are merely a change of variables from the original fields; the number of superfields is the same as that of  the original fields, and the expression of any quantity in terms of original fields can be recovered from that of superfields by putting $\theta_{\pm}=0$.
\noindent
Nonetheless, because of the simplicity of the expression (\ref{SFSUSY})  of $Q_{\pm}$,  the relations (\ref{SUSYSF}) and (\ref{intrepSF})  will largely simplify the 
classification of   supersymmetric invariant local functionals  in the fermion number zero sector in later section.
Therefore, we shall use these  eight superfields instead of eight fundamental fields 
 for the discussion of the $Q_{\pm}$-cohomology and 
for a proof of nonrenormalization theorem.  
Quantum numbers for these superfields and $\theta_{\pm}$ are assigned as listed in Table \ref{table2}.


\begin{table}[htbp]
\caption{Quantum numbers for superfields and $\theta_{\pm}$}
\begin{center}
\begin{tabular}{|c|c|c|c|c|c|}\hline
 & $\Phi_{\pm}$ & $\Psi_{\pm}$ & $\Upsilon_{\pm}$ & $S_{\pm}$  & $\theta_{\pm}$ \\
 \hline
 $N_F$ &0& 1 & 0 & $-1$ &  1 \\ \hline
$U(1)$ & $\pm 1$& $\pm 1$ & $\pm 1$ & $\pm 1$ &  0 \\  \hline
$U(1)_R$ &  $\pm 1$& $0$ & $\mp 1$ &$0$ &  $\pm 1$  \\
 \hline
\end{tabular}
\end{center}
\label{table2}
\end{table}%


\subsection{Nilpotent-SUSY invariant action and cyclic Leibniz rule}
Now, we write down a kinetic term $S_0$ of the nilpotent-SUSY invariant tree action, 
\begin{eqnarray}
S_0&=&\langle\nabla \phi_-,\nabla \phi_+\rangle+i\langle\nabla\bar{\chi}_-,\chi_+\rangle 
- i\langle{\chi}_-,\nabla\bar{\chi}_+\rangle
  - \langle F_-,F_+\rangle\nonumber \\
  &=&  \int d^2\theta \langle \Psi_-,\Psi_+\rangle  .
\label{kinetic-term}
\end{eqnarray}
Due to the translational invariance and locality, the difference operator $\nabla$ in the above is actually a function of the site difference, i.e.~$\nabla_{mn}= \nabla(m-n)$, and also is bounded for large $|k|$ as $|\nabla(k)|<L\exp(-M|k|)$ with positive real numbers $L,M$; in other words, it is TILC. 

In general, this kinetic term may have a doubling problem. In order to avoid it we introduce Wilson terms as well as ordinary mass terms,
\begin{eqnarray}
S_m&=&    \langle  F_+,G_+\phi_+   \rangle 
-      \langle  \chi_+,G_+\bar{\chi}_+   \rangle 
+ \langle  F_-,G_-\phi_-   \rangle 
+      \langle  \chi_-,G_-\bar{\chi}_-   \rangle \nonumber   \\
&=& -\int d^2 \theta \theta_- \langle \Psi_+, G_+\Phi_+ \rangle 
+ \int d^2 \theta \theta_+ \langle \Psi_-, G_-\Phi_- \rangle .
\label{mass-term}
\end{eqnarray}
\noindent
Here $G_{\pm}$ expresses mass and Wilson terms for $\pm$ fields. 
As a typical example, in the case of $(\nabla)_{mn}=(\delta_{m+1, n} -\delta_{m-1, n})/2$, we can take the term
\begin{eqnarray}
(G_+)_{mn} = m_+\delta_{m,n} + r_+(\delta_{m+1, n}-2\delta_{m,n}+\delta_{m-1, n})/2 ,\nonumber \\
(G_-)_{mn} = m_-\delta_{m,n} + r_-(\delta_{m+1, n}-2\delta_{m,n}+\delta_{m-1, n})/2, 
\label{Wilsonterm-1}
\end{eqnarray}
\noindent 
where $m_{\pm}$ are mass parameters with $m_+^*=m_-$ and $r_{\pm}$ are Wilson parameters with $r_+^*=r_-$. 
In the  H-representation, they are expressed as 
\begin{equation}
\tilde{\nabla}(w)=(w-w^{-1})/2,~\tilde{G}_{\pm}(w) = m_{\pm} + r_{\pm}(w-2 +w^{-1})/2 .
\label{Wilsonterm-2}
\end{equation}
Note that, as far as site indices are symmetric, any choice of $(G_{\pm})_{mn}=(G_{\pm})_{nm}$ makes $S_m$ invariant under nilpotent SUSY (\ref{SUSYtransformation1}) with the symmetric difference operator $(\nabla)_{mn}=-(\nabla)_{nm}$.\footnote{Though confusing, difference operator with anti-symmetric site indices is usually called symmetric.}

Let us turn to the interaction terms. We begin with the simplest case, namely three point interaction which can be written in the form
\begin{eqnarray}
S_{int}&=&  \lambda_+\langle F_+,\phi_+*\phi_+   \rangle
-  \lambda_+\langle \chi_+,\bar{\chi}_+*\phi_+   \rangle
-  \lambda_+\langle \chi_+,\phi_+*\bar{\chi}_+  \rangle \nonumber \\
 &&+  \lambda_-\langle F_-,\phi_-*\phi_-   \rangle
+  \lambda_-\langle \chi_-,\bar{\chi}_-*\phi_-   \rangle
+  \lambda_-\langle \chi_-,\phi_-*\bar{\chi}_-  \rangle \nonumber   \\
&=&  -\int d^2 \theta \theta_- \lambda_+\langle \Psi_+, \Phi_+*\Phi_+ \rangle 
+ \int d^2 \theta \theta_+ \lambda_-\langle \Psi_-, \Phi_-*\Phi_- \rangle 
\label{int-term}
\end{eqnarray}
\noindent
where we denote a two-field product as
\begin{equation}
 (A*B)_{\ell}\equiv\sum_{mn}M_{\ell mn}A_{m}B_{n}.
 \label{*product}
\end{equation}
Here the coefficient $M_{\ell mn}$ is symmetric with respect to the last two indices, $M_{\ell mn}=M_{\ell nm}$, so that two bosonic fields are commutative and two fermionic fields are anti-commutative to each other.
Three-point interaction action (\ref{int-term}) is supersymmetric invariant if the product (\ref{*product}) satisfies the next relation for any bosonic fields $A, B, C$,
\noindent
\begin{equation}
\langle (\nabla A),B*C \rangle+ \langle (\nabla B),C*A\rangle+\langle (\nabla C),A*B\rangle=0 ,
\end{equation}
\begin{equation}
{\rm i.e.}\quad
\sum_k\left(\nabla_{k\ell}M_{kmn}+\nabla_{km}M_{kn\ell}+\nabla_{kn}M_{k\ell m}\right)=0.
\end{equation}
\noindent
This is the cyclic Leibniz rule (CLR) proposed in \cite{KSS-2} and plays a crucial role in our formulation. One of the simplest choices is 
\begin{eqnarray}
M_{\ell mn} &=&\frac{1}{6}(2\delta_{\ell, m-1} \delta_{\ell, n-1} 
+ \delta_{\ell, n-1} \delta_{\ell, m+1}+ \delta_{\ell, n+1}\delta_{\ell, m-1} 
+ 2 \delta_{\ell, m+1}\delta_{\ell, n+1}), \nonumber \\
\nabla_{mn} &=& \frac{1}{2}(\delta_{m, n-1} - \delta_{m, n+1}).
\label{eg-1}
\end{eqnarray}

In order to extend to multi-point interaction more than three, we introduce  multi-field symmetric product as an extension of (\ref{*product}),
\begin{equation}
[\![B^{(1)},B^{(2)},\cdots,B^{(k)}]\!]_m=\sum_{n_1\cdots  n_{k}}M_{mn_1\cdots n_{k}}B^{(1)}_{n_1}\cdots B^{(k)}_{n_{k}}
 \label{kproduct}
\end{equation}
where the last $k$ indices of the coefficient $M$ are totally symmetric.
The above $*$-product corresponds to the $k=2$ case: $(A*B)_m=[\![A,B]\!]_m$.
Again if this multi-field product satisfies the following CLR for any $k+1$ bosonic fields $B^{(i)}$
\begin{eqnarray}
\langle\nabla B^{(0)},[\![B^{(1)},B^{(2)},\cdots,B^{(k)}]\!]\rangle&+&
\langle\nabla B^{(1)},[\![B^{(2)},B^{(3)},\cdots,B^{(0)}]\!]\rangle
\nonumber\\
+\,\cdots &+&
\langle\nabla B^{(k)},[\![B^{(0)},B^{(1)},\cdots,B^{(k-1)}]\!]\rangle=0,
\label{kCLR}
\end{eqnarray}
\begin{equation}
{\rm i.e.}\quad
\sum_{p:\,{\rm cyclic\,perms\,of\,}\{0,1,\cdots,k\}}\sum_m\nabla_{mn_{p(0)}}M_{mn_{p(1)}\cdots n_{p(k)}}=0,
\end{equation}
then a $(k+1)$-point interaction term like
\begin{equation}
-\lambda^{(k+1)}_+\int d^2\theta\,\theta_-\langle\Psi_+,[\![\Phi_+,\cdots,\Phi_+]\!]\rangle
+\lambda^{(k+1)}_-\int d^2\theta\,\theta_+\langle\Psi_-,[\![\Phi_-,\cdots,\Phi_-]\!]\rangle
\end{equation}
is invariant under nilpotent SUSY.

\subsection{Effective action on lattice and Ward-Takahashi identities}

We will discuss quantum corrections for our model in the subsequent sections. In order to make it clear of what kind of quantum symmetry our arguments treat, here we define the effective action and see its invariance under the supersymmetry.
Let us begin by deriving the Ward-Takahashi identity 
for the 1-particle-irreducible(1PI) effective action $\Gamma(\phi)$ which is defined by the Legendre transformation 
$\Gamma(\phi)=\sum_{i,n}J^i_n\phi^i_n-W(J)$ from the generating functional $W(J)$ for the connected Green's functions. 
$W(J)$ is defined by the following path integral with source fields $J^i_n$
\begin{equation}
e^{W(J)}=\int{\cal D}\varphi\, e^{-S(\varphi)+\sum_{i,n}J^i_n\varphi^i_n}.
\label{genfunc}
\end{equation}
Here we use an index $i$ for distinguishing various fields and $n$ for lattice sites. 
The argument field $\phi^i_n$ of $\Gamma(\phi)$ is defined as an expectation value of $\varphi^i_n$ under the source $J$:
\begin{equation}
\phi^i_n\equiv\frac{\partial W(J)}{\partial J^i_n}=\langle\varphi^i_n\rangle_J.
\end{equation}
We use left-derivative convention for fermionic fields. So we have \begin{equation}
J^i_n=(-1)^{|J^i||\phi^i|}\frac{\partial\Gamma(\phi)}{\partial\phi^i_n}
\end{equation}
with grassmann parity $|A|=0$ or $1$ for grassmann even or odd field $A$, respectively.

Let us make the change of integration fields from $\varphi$ to $\varphi+\delta\varphi$ in  eq.~(\ref{genfunc}). 
If we choose $\delta\varphi$ as an infinitesimal symmetry transformation and the action $S(\varphi)$ 
and the measure ${\cal D}\varphi$ are invariant under the transformation, then we obtain the identity
\begin{equation}
\int{\cal D}\varphi\, \sum_{i,n}J^i_n\delta\varphi^i_n\,
e^{-S(\varphi)+\sum_{i,n}J^i_n\varphi^i_n}=0,
\end{equation}
that is,
\begin{equation}
\sum_{i,n}J^i_n\langle \delta\varphi^i_n \rangle_J 
=\sum_{i,n}\langle \delta\varphi^i_n \rangle_J\frac{\partial\Gamma(\phi)}{\partial\phi^i_n}
=0  .
\label{wardid}
\end{equation}
Since our supersymmetry transformations $Q_{\pm}$ are linearly realized in the typical form of
\begin{equation}
\delta\varphi^i_n=\epsilon\sum_{j,m}A^{ij}_{nm}\varphi^j_m,
\end{equation}
\noindent
we find 
\begin{equation}
\langle\delta\varphi^i_n\rangle_J=
\epsilon\sum_{j,m}A^{ij}_{nm}\langle\varphi^j_m\rangle_J
=\epsilon\sum_{j,m}A^{ij}_{nm}\phi^j_m .
\end{equation}
Thus defining the transformation for $\phi^i_n$ as $\delta\phi^i_n=\epsilon\sum_{j,m}A^{ij}_{nm}\phi^j_m$,  
we obtain the Ward-Takahashi identity for the effective action
\begin{equation}
\delta\Gamma(\phi)\equiv
\sum_{i,n}\delta\phi^i_n\frac{\partial\Gamma(\phi)}{\partial\phi^i_n}=0 .
\end{equation}
This shows that the supersymmetry transformation for the effective action has the same form of a tree level action. 
This comes from the fact that we retain auxiliary fields and the transformation is kept linear. 
Also notice that the coefficients $A^{ij}_{nm}$ in the transformation for the original field $\varphi$ 
is inherited to those of $\phi$, which means the difference operator in the coefficients is the same both for $\varphi$ and $\phi$. 
In other words, we can just replace original fields by its expectation values in supersymmetry transformations $Q_{\pm}$ and the Ward-Takahashi identity becomes
\begin{equation}
Q_{\pm}\Gamma=0.
\label{WT-id}
\end{equation}
In the following sections we use the same symbol for the original field and its expectation value in the discussion of the effective action for simplicity.

\section{Cohomology of $Q_{\pm}$ and the classification of $Q_{\pm}$-invariant functionals of fields }

In this section, we are going to analyze cohomology of the nilpotent-SUSY transformations $Q_{\pm}$ and classify all $Q_{\pm}$-invariant functionals of fields with TILC.
As will be seen shortly, there are two categories which we call type-I and -II. Typical examples of the former are kinetic terms, while those of the latter are mass terms and interaction terms appeared in the tree action.

Since $Q_{\pm}$ do not change the number of `$+$'-fields nor that of `$-$'-fields, we can safely restrict our discussion in the subsector with definite number of `$\pm$'-fields without loss of generality. If we write ${\cal O}(n_+,n_-)$ for the functional consisting of $n_+$ `$+$'-fields and $n_-$ `$-$'-fields, then
\begin{equation}
N_{\pm}{\cal O}(n_+,n_-)=n_{\pm}{\cal O}(n_+,n_-),
\end{equation}
for the operator defined in (\ref{operator-def-N}).

Immediate consequence at this point is the following. In use of the algebra $\{Q_{\pm},K_{\pm}\}=N_{\pm}$ in (\ref{QKRalgebra}), 
$Q_{\pm}$-closed functional ${\cal O}(n_+,n_-)$ can be written in the form
\begin{eqnarray}
{\cal O}(n_+,n_-) &=& Q_+K_+{\cal O}(n_+,n_-)/n_+\qquad{\rm if }\quad Q_+{\cal O}=0~{\rm and}~n_+\ne0,\label{delta+}\\
{\cal O}(n_+,n_-) &=& Q_-K_-{\cal O}(n_+,n_-)/n_-\qquad{\rm if }\quad Q_-{\cal O}=0~{\rm and}~n_-\ne0\label{delta-}.
\end{eqnarray}
Thus $Q_+$ and $Q_-$ cohomologies in the space of functionals $\{{\cal O}(n_+,n_-)\}$ are trivial for $n_+\ne 0$ and $n_-\ne 0$ respectively.

Our next task is to determine $Q_-$ cohomology in the subspace $\{{\cal O}(n_+,0)\}$ and $Q_+$ cohomology in the subspace $\{{\cal O}(0,n_-)\}$. Before stating the result, let us define the {\it CLR terms} in the $N_F=1$ sector as
\begin{eqnarray}
{\cal C}_+^{N_F=1}&=&
\sum_{m,n_1,\cdots,n_k}C_{mn_1\cdots n_k}\chi_{+\,m}\phi_{+\,n_1}\cdots\phi_{+\,n_k}
\quad{\rm in}\quad\{{\cal O}(k+1,0)\},\nonumber\\
{\cal C}_-^{N_F=1}&=&
\sum_{m,n_1,\cdots,n_k}C_{mn_1\cdots n_k}\chi_{-\,m}\phi_{-\,n_1}\cdots\phi_{-\,n_k}
\quad{\rm in}\quad\{{\cal O}(0,k+1)\}\label{CLRterm}
\end{eqnarray}
where the coefficients $C$ are TILC and satisfy the CLR relation (\ref{kCLR}).
Note that CLR terms with $N_F=1$ have the maximum $U(1)_R$ charge (eigenvalue of the operator $R$) in the space $\{{\cal O}(k+1,0)\}$ and the minimum in $\{{\cal O}(0,k+1)\}$.

Then we have the following theorem for $N_F=1$ sector:
\noindent
\begin{theorem}  {\rm(Fundamental theorem on the cohomology of nilpotent SUSY)} \\
If ${\cal P}_{+}$  is a local functional in $\{{\cal O}(n_+,0)|n_+>0\}$ with $N_F=1$ and satisfies
\begin{equation}
Q_{-} {\cal P}_{+} =0, 
\end{equation}
\noindent
then ${\cal P}_+$ can be written in the form
\begin{equation}
{\cal P}_{+}= {\cal C}_+^{N_F=1} + Q_{-}{\cal Q}_{+},
\end{equation}
where ${\cal Q}_+$ is a local functional in $\{{\cal O}(n_+,0)|n_+>0\}$ with  $N_F=2$.\\
Similarly, 
if ${\cal P}_{-}$ is a local functional in $\{{\cal O}(0,n_-)|n_->0\}$  with  $N_F=1$ and satisfies
\begin{equation}
Q_{+} {\cal P}_{-} =0, 
\end{equation}
\noindent
then ${\cal P}_-$ can be written in the form
\begin{equation}
{\cal P}_{-}= {\cal C}_-^{N_F=1} + Q_{+}{\cal Q}_{-}.
\end{equation}
where   ${\cal Q}_{-}$ is a local functional in $\{{\cal O}(0,n_-)|n_->0\}$ with $N_F=2$.
\label{fthm}
\end{theorem}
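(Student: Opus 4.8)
The plan is to prove the statement for ${\cal P}_+$ in the pure $+$-sector; the statement for ${\cal P}_-$ then follows by the manifest $+\leftrightarrow-$ symmetry of the transformation rules (\ref{SUSYtransformation1}), under which $Q_-$ acting on $+$-fields and $Q_+$ acting on $-$-fields play identical roles up to signs, with the $U(1)_R$ charge reversing so that ``maximal $R$'' is traded for ``minimal $R$''. From here on I work with $Q_-$ on $\{{\cal O}(n_+,0)\}$, where the only relevant part of $Q_-$ is $i\nabla\phi_+\!\cdot\!\partial/\partial\chi_+ - i\nabla\bar{\chi}_+\!\cdot\!\partial/\partial F_+$, so that $\phi_+,\bar{\chi}_+$ are $Q_-$-closed while $\chi_+\mapsto i\nabla\phi_+$ and $F_+\mapsto -i\nabla\bar{\chi}_+$. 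Note that $Q_-$ lowers $N_F$ by one and preserves the total count $n_+$, so any preimage we construct automatically lies in the $+$-sector with $N_F=2$, as required.

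First I would grade the fixed-$n_+$ space by the $U(1)_R$ eigenvalue $r$ of the operator $R$ in (\ref{U(1)_R}), writing ${\cal P}_+=\sum_r{\cal P}_+^{(r)}$. Because $[Q_-,R]=-Q_-$ in (\ref{QKRalgebra}) makes $Q_-$ raise $r$ by one unit, the image $Q_-{\cal P}_+^{(r)}$ sits at charge $r+1$; pieces with distinct $r$ cannot cancel, so $Q_-{\cal P}_+=0$ splits into $Q_-{\cal P}_+^{(r)}=0$ for every $r$. A short count of quantum numbers ($N_F=1$ forces one extra $\chi_+$ over $\bar{\chi}_+$, and each $\phi_+$ contributes $+1$ to $R$ while each $F_+$ contributes $-1$) shows that the maximal value $r=n_+-1$ is attained only by monomials of the shape $\chi_+\phi_+\cdots\phi_+$. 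For these the image of $Q_-$ lands in the totally bosonic $\phi_+^{\,n_+}$ sector, and the vanishing of that image is exactly the cyclic Leibniz relation (\ref{kCLR}). Hence the top piece is automatically a CLR term ${\cal C}_+^{N_F=1}$.

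The heart of the argument, and the step I expect to be the main obstacle, is to show that every $Q_-$-closed piece with $r<n_+-1$ is $Q_-$-exact, i.e.\ to build a contracting homotopy on the sub-top part of the complex. The naive inverse of $\chi_+\mapsto i\nabla\phi_+$ and $F_+\mapsto -i\nabla\bar{\chi}_+$ fails because it would require dividing by $\nabla$, whose H-representation $\tilde{\nabla}(w)$ has zeros inside the annular holomorphy domain (for the symmetric choice $\tilde{\nabla}(w)=(w-w^{-1})/2$ they sit at $w=\pm1$); this is precisely the obstruction that is responsible for the surviving cohomology at the top. The resolution I would pursue is to pass to the H-representation of the TILC coefficients and verify that the $Q_-$-closedness equations force the numerators generated by the homotopy to vanish at the zeros of $\tilde{\nabla}$ to matching order, so that the quotient has only removable singularities and again defines a holomorphic function on a (possibly slightly narrower) annulus, hence a genuine TILC coefficient. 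Equivalently, one can proceed inductively by filtering on the number of $F_+$ fields and using the split $Q_-=Q_-^{(A)}+Q_-^{(B)}$ into the $\chi_+\to\phi_+$ part (which preserves the filtration) and the $F_+\to\bar{\chi}_+$ part (which lowers it), the two pieces anticommuting and each squaring to zero, and running the associated descent. The delicate technical point throughout is locality preservation: one must check that the holomorphic division does not spoil the exponential bound (\ref{locality-1}), so that the resulting ${\cal Q}_+$ is still local.

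Assembling the pieces, ${\cal P}_+=\sum_r{\cal P}_+^{(r)}$ has its top piece equal to ${\cal C}_+^{N_F=1}$ while every lower piece is written as $Q_-{\cal Q}_+^{(r)}$; setting ${\cal Q}_+=\sum_{r<n_+-1}{\cal Q}_+^{(r)}$ yields the claimed form ${\cal P}_+={\cal C}_+^{N_F=1}+Q_-{\cal Q}_+$ with ${\cal Q}_+$ a local $N_F=2$ functional in $\{{\cal O}(n_+,0)\}$. Repeating the construction with $+\leftrightarrow-$ gives the second half of the theorem.
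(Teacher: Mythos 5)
Your outline reproduces the skeleton of the paper's own proof (Appendix~C): you grade by the $U(1)_R$ charge, which is equivalent to the paper's grading by $K$ (the number of $F_+$ plus $\bar{\chi}_+$ fields, via $R=n_+-1-2K$); you split $Q_-{\cal P}_+=0$ charge by charge using $[Q_-,R]=-Q_-$ from (\ref{QKRalgebra}); and you correctly identify the top sector $\chi_+\phi_+\cdots\phi_+$, where closedness is literally the CLR relation (\ref{kCLR}), so the top piece is a CLR term by definition. All of that is sound and is exactly how the paper begins.

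The genuine gap is that the statement you yourself call ``the heart of the argument'' --- every closed piece with $r<n_+-1$ is exact --- is precisely where all of the technical content of the paper's proof lives, and your proposal replaces it with a declaration of intent rather than an argument. What is actually required is the lemma of Appendix~B: the general TILC solution of $\sum_{a=1}^{M}\tilde{A}_a(z_1,\dots,z_N)\tilde{\nabla}(z_a)=0$, \emph{valid only for} $M\le N$, is $\tilde{A}_a=\sum_b\tilde{A}_{\{ab\}}\tilde{\nabla}(z_b)$ with antisymmetric holomorphic $\tilde{A}_{\{ab\}}$, proved by induction on $M$ using Lagrange interpolation at the zeros of $\tilde{\nabla}$ and removable-singularity arguments; and then its recursive application (the hierarchy $C^{\{p\}}$ in (\ref{p-solution})) to the \emph{coupled} closedness conditions (\ref{closed condition 1})--(\ref{closed condition 3}) linking $B^{(p)}$ to $B^{(p-1)}$, with a check that locality (\ref{locality-1}) survives. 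None of this is carried out in your text. Moreover, your sketch attributes the survival of cohomology at the top purely to the zeros of $\tilde{\nabla}$; but those zeros are present in every sector, so this cannot be the discriminating mechanism. The actual discriminator is the counting condition $M\le N$: in the site representation (\ref{sol-site}) one needs at least one spectator index not contracted with $\nabla$, which the $F_+$ and $\bar{\chi}_+$ fields supply when $K\ge1$, whereas at $K=0$ all $N+1$ indices of $B^{(0)}$ are contracted in (\ref{closed condition 1}), giving $N+1$ $\nabla$-terms but only $N$ independent variables, so the lemma --- and hence exactness --- fails there. Since your proposed verification (``closedness forces the numerators to vanish at the zeros to matching order'') is exactly the statement that is \emph{false} at the top sector, your argument as written contains no criterion for when it holds, and therefore cannot close the sub-top case; it is a plan for the paper's proof, not a proof.
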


The proof for this theorem is shown in Appendix~C with some preparations in Appendix~B.

We turn to discuss a nilpotent-SUSY invariant local functional $S$
with $N_F=0$. 
The quantity shall be useful in discussing an  effective action  and 
proving a nonrenormalization theorem. 

\begin{proposition}
 A nilpotent-SUSY invariant local functional $S$
with $N_F=0$ can be generally written as 
\begin{eqnarray}
S= Q_{+}Q_{-} T(+,-)&+&Q_{+}(+{\rm type~ CLR~terms~with~}N_F=1)\nonumber \\
 &+&Q_- (-{\rm type~ CLR~terms~with~}N_F=1), 
 \label{S0}
\end{eqnarray}
\noindent
where $T(+,-)$ is  a local functional with $N_F=2$.  
\label{prop1}
\end{proposition}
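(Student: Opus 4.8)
The plan is to reduce the statement to the Fundamental Theorem (Theorem~\ref{fthm}) by exploiting the bigrading of functionals by the numbers $(n_+,n_-)$ of $+$- and $-$-fields, measured by the operators $N_\pm$ of (\ref{operator-def-N}). Since $Q_+$ and $Q_-$ commute with both $N_+$ and $N_-$, they act within each sector $\{{\cal O}(n_+,n_-)\}$, so the hypotheses $Q_\pm S=0$ descend to $Q_\pm S(n_+,n_-)=0$ separately in every sector. I would therefore split $S$ into its components in the four cases $n_+,n_->0$; $n_+>0,\,n_-=0$; $n_+=0,\,n_->0$; and $n_+=n_-=0$, treat each independently, and sum. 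The last, field-independent, sector is merely a constant, which is not of the stated form and which I would simply discard (equivalently, restrict to functionals with no constant part, as is natural for an action).

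For the mixed sector $n_+,n_->0$ I expect the contribution to be purely of $Q_+Q_-$-exact (type-I) form. Using $\{Q_+,K_+\}=N_+$ from (\ref{QKRalgebra}) together with $Q_+S=0$, the triviality (\ref{delta+}) gives $S=Q_+W$ with $W=K_+S/n_+$ of $N_F=1$. The point is that $W$ is automatically $Q_-$-closed: because $\{Q_-,K_+\}=0$, one gets $Q_-W=-K_+Q_-S/n_+=0$. Since $n_->0$, the triviality (\ref{delta-}) then writes $W=Q_-T$ with $T=K_-W/n_-=K_-K_+S/(n_+n_-)$ of $N_F=2$, hence $S=Q_+Q_-T$. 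Locality survives throughout, as $K_\pm$ only relabel fields sitting at a common site.

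For the pure sector $n_+>0,\,n_-=0$ the same first step gives $S=Q_+P$ with $P=K_+S/n_+$ a local functional in $\{{\cal O}(n_+,0)\}$ of $N_F=1$, and once more $\{Q_-,K_+\}=0$ with $Q_-S=0$ forces $Q_-P=0$. This is exactly the hypothesis of Theorem~\ref{fthm}, so $P={\cal C}_+^{N_F=1}+Q_-{\cal Q}_+$, giving $S=Q_+{\cal C}_+^{N_F=1}+Q_+Q_-{\cal Q}_+$, precisely the type-II-plus-type-I form. The sector $n_+=0,\,n_->0$ is symmetric, using $\{Q_+,K_-\}=0$ and the second half of Theorem~\ref{fthm}, and yields $S=Q_-{\cal C}_-^{N_F=1}-Q_+Q_-{\cal Q}_-$. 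Summing the four sectors gives (\ref{S0}), with $T(+,-)$ the collected $N_F=2$ functionals and the CLR terms coming only from the two single-sign sectors.

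I expect the main subtlety to be organizational rather than computational. The genuinely hard analytic content, that a $Q_\mp$-closed $N_F=1$ functional in a single-sign sector equals a CLR term up to an exact piece, is carried entirely by Theorem~\ref{fthm}, which I take as given. Granting it, what must be checked carefully are the two closedness statements $Q_\mp(K_\pm S)=0$, both of which rest on the off-diagonal anticommutators $\{Q_\pm,K_\mp\}=0$ in (\ref{QKRalgebra}), and the bookkeeping of the $N_F$ and $(n_+,n_-)$ grades so that Theorem~\ref{fthm} is invoked in the right sector. The delicate part is making sure the CLR (type-II) terms arise \emph{only} from the pure sectors while the mixed sector produces nothing beyond $Q_+Q_-T$, and that the empty sector can be safely dropped.
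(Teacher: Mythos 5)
Your proposal is correct and follows essentially the same route as the paper's own proof: decompose $S$ into $(n_+,n_-)$ sectors, use the homotopy identities ${\cal O}=Q_\pm K_\pm {\cal O}/n_\pm$ together with $\{Q_\mp,K_\pm\}=0$ to reduce the mixed sector to a $Q_+Q_-$-exact term and the pure sectors to the hypotheses of Theorem~\ref{fthm}, then collect terms. The only cosmetic differences are that you spell out the two-step derivation $S=Q_+Q_-\bigl(K_-K_+S/(n_+n_-)\bigr)$ in the mixed sector (which the paper states more tersely, with a footnote) and that you explicitly dispose of the constant $n_+=n_-=0$ sector, which the paper leaves implicit.
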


\begin{proof}
Since nilpotent SUSY-transformations $Q_\pm$ do not change the number of $+$-fields nor $-$-fields, we can separately argue each term with definite numbers of $\pm$-fields.
For a term with $n_+\ne0$ and $n_- \ne 0$ (let us denote ${\cal T}$), 
from  (\ref{delta+}) and (\ref{delta-}) 
any  nilpotent-SUSY invariant local functional   
 takes the following form,  
\begin{equation}
{\cal T}=Q_-Q_+\Big(K_+K_- {\cal T}/(n_+n_-) \Big).
\label{Tterm}
\end{equation}
\noindent
Note that both $K_+$ and $K_-$ map  from local functionals to  local functionals. 

For a term with $n_-=0$, $n_+\ne 0$ and $N_F=0$ (let us denote ${\cal U}_+$), 
any nilpotent-SUSY invariant functional 
can be written as 
\begin{equation}
{\cal U}_+ =Q_+K_+ {\cal U}_+  /n_+ .
\end{equation}
\noindent
 From   $Q_-{\cal U}_+=0$  and   $\{Q_-,K_+\}=0$ in (\ref{QKRalgebra}), 
it follows that $ {\cal P}_+ \equiv K_+ {\cal U}_+ /n_+$ is a $Q_-$-invariant local functional with $N_F=1$. 
Thus
we can apply Theorem~\ref{fthm}
to ${\cal P}_+$, so that we have 
\begin{equation}
{\cal P}_{+}=(+{\rm type~  CLR~ terms~with~}N_F=1) + Q_{-}{\cal Q}_{+} ,
\end{equation}
where ${\cal Q}_+$ is a local functional with $n_+>0$, $n_-=0$ and  $N_F=2$.
Thus
\begin{equation}
 {\cal U}_+= Q_+ {\cal P}_{+} = Q_+(+{\rm type~  CLR~ terms~with~}N_F=1) + Q_+Q_{-}{\cal Q}_{+} ,
\end{equation}
\noindent
From similar discussions,  for a nilpotent-SUSY invariant local functional ${\cal V}_-$ with  $n_+=0$, $n_-\ne 0$ and $N_F=0$, we can get 
\begin{equation}
 {\cal V}_-= Q_-(-{\rm type~  CLR~ terms~with~}N_F=1) + Q_-Q_{+}{\cal Q}_{-} ,
\end{equation}
where ${\cal Q}_-$ is a local functional with $n_+=0$, $n_->0$ and  $N_F=2$.  

Thus, putting all terms together, we have 
\begin{eqnarray}
S&=&Q_-Q_+\Big(K_+K_- {\cal T}/(n_+n_-) \Big)+
 Q_+(+{\rm type~  CLR~ terms~with~}N_F=1) + Q_+Q_{-}{\cal Q}_{+} \nonumber \\
&& +Q_-(-{\rm type~  CLR~ terms~with~}N_F=1) + Q_-Q_{+}{\cal Q}_{-} \nonumber \\
  &=&
  Q_+Q_-\Big(-K_+K_- {\cal T}/(n_+n_-) +{\cal Q}_+  - {\cal Q}_-\Big) \nonumber \\
&&  +
    Q_+(+{\rm type~  CLR~ terms~with~}N_F=1)  +
       Q_-(-{\rm type~  CLR~ terms~with~}N_F=1).   \nonumber \\
       &&   
\label{S}
\end{eqnarray}
This is indeed a form (\ref{S0}).\footnote{Note that any local functional  $O$ with $n_+\ne 0$ and $n_-\ne 0$  can be always expressed as $K_+K_- O'$ by another local functional $O'$ 
up to $Q_{\pm}$ exact form functionals, since  $1=[Q_+Q_-,K_+K_-]/(n_+n_-)+
(Q_+K_+/n_+ +  Q_-K_- /n_-)$ for $n_+\ne 0$ and $n_- \ne 0$.  }   
\end{proof}

The result (\ref{S0}) of Proposition~\ref{prop1} means that nilpotent-SUSY invariant local functionals with $N_F=0$ are 
classified into (i) $Q_+$ and $Q_-$ exact forms (type-I),  (ii) $Q_+$ exact but not $Q_-$ exact forms (type-II$_+$), 
and (iii) $Q_-$ exact but not $Q_+$ exact forms (type-II$_-$). 
The important thing is that  functionals of type-II are only CLR terms.

Let us consider a superfield counterpart of Proposition~\ref{prop1}.
From   (\ref{SUSYSF}) and (\ref{intrepSF}), we can translate a  nilpotent SUSY transformed functional 
$Q_{\pm}O$  into a functional of superfields, 
\begin{eqnarray}
Q_{\pm}O &=& \int d^2\theta \theta_+\theta_- Q_{\pm} {\cal O}  \nonumber \\
&= &\int d^2\theta \theta_+\theta_- \frac{\partial}{\partial \theta_{\pm}} {\cal O}  \nonumber \\
&=& \mp\int d^2\theta \theta_{\mp}  {\cal O} .
\label{intrepSF1}
\end{eqnarray}
\noindent
And also, 
\begin{eqnarray}
Q_{+}Q_{-} O &=&-\int d^2\theta \theta_+\theta_- 
\frac{\partial}{\partial\theta_{-}}\frac{\partial}{\partial\theta_{+}} {\cal O}  \nonumber \\
&= &  -\int d^2\theta {\cal O}.  
\label{intrepSF2}
\end{eqnarray}
By utilizing  (\ref{intrepSF1}), (\ref{intrepSF2}) and (\ref{S0}), 
the next proposition follows:

\begin{proposition}
A nilpotent-SUSY invariant local functional with $N_F=0$ of 
superfields can be written in the form  
\begin{eqnarray}
S=  \int d^2 \theta {\cal T}(+,-)- \int d^2 \theta \theta_- (+{\rm type~ CLR~terms~with~}N_F=1)\nonumber \\
 +\int d^2 \theta  \theta_+(-{\rm type~ CLR~terms~with~}N_F=1), 
 \label{S01}
\end{eqnarray}
\noindent
where the ``$\pm{\rm type~ CLR~terms~with~}N_F=1$"  consisting of  $k+1$ superfields $(k=0,1,2,\cdots)$ are given by  
\begin{equation}
\sum_{mn_1\cdots n_k} C_{mn_1\cdots n_k} \Psi_{\pm m} \Phi_{\pm n_1} \cdots  \Phi_{\pm n_k} ,
\label{CLRtermSF}
\end{equation}
\noindent
whose coefficient $C$ is a TILC  satisfying  the CLR.  
\label{prop2}
\end{proposition}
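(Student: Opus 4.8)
The plan is to read off the claimed decomposition (\ref{S01}) directly from Proposition~\ref{prop1} by transcribing each term of (\ref{S0}) into the superfield language, using the identities (\ref{intrepSF1}) and (\ref{intrepSF2}). First I would make explicit the dictionary established around (\ref{intrepSF}): the map that sends a local functional $O$ of the eight original fields to the local functional ${\cal O}$ of the eight superfields, via the substitutions $\phi_\pm\to\Phi_\pm$, $\chi_\pm\to\Psi_\pm$, $F_\pm\to\Upsilon_\pm$, $\bar\chi_\pm\to S_\pm$, is invertible with inverse $O={\cal O}\vert_{\theta_\pm=0}$, and intertwines each $Q_\pm$ with $\partial/\partial\theta_\pm$, which is exactly (\ref{SUSYSF}). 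This is precisely what makes the translation formulas (\ref{intrepSF1}) and (\ref{intrepSF2}) applicable, and it guarantees that the transcription is faithful rather than merely suggestive.

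Next I would apply these formulas term by term to (\ref{S0}). By (\ref{intrepSF2}) the $Q_+Q_-$-exact piece satisfies $Q_+Q_- T(+,-)=-\int d^2\theta\,{\cal T}(+,-)$, where ${\cal T}(+,-)$ is the superfield lift of $T(+,-)$; since $T(+,-)$ is an otherwise unconstrained $N_F=2$ functional, I would absorb the sign by relabelling ${\cal T}(+,-)\to-{\cal T}(+,-)$, reproducing the first term $\int d^2\theta\,{\cal T}(+,-)$ of (\ref{S01}). For the remaining two pieces I would use (\ref{intrepSF1}) in its two sign choices: $Q_+$ of the $+$type CLR term becomes $-\int d^2\theta\,\theta_-$ of its lift, and $Q_-$ of the $-$type CLR term becomes $+\int d^2\theta\,\theta_+$ of its lift, which are exactly the second and third terms of (\ref{S01}), signs included.

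The one step deserving genuine care is the identification of the superfield CLR terms (\ref{CLRtermSF}). Here I would verify that lifting ${\cal C}_\pm^{N_F=1}$ of (\ref{CLRterm}) under $\chi_\pm\to\Psi_\pm$, $\phi_\pm\to\Phi_\pm$ produces precisely $\sum_{mn_1\cdots n_k}C_{mn_1\cdots n_k}\Psi_{\pm m}\Phi_{\pm n_1}\cdots\Phi_{\pm n_k}$, and that the coefficient $C$ is carried over unchanged, so that it remains a TILC obeying the CLR as supplied by Theorem~\ref{fthm}. I do not expect a conceptual obstacle; the substantive work is bookkeeping, namely tracking the Grassmann signs generated by the $\theta$-integrations so that they match the precise signs displayed in (\ref{S01}). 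Because the lift is a charge-preserving bijection (the superfields carry the same $N_F$, $U(1)$ and $U(1)_R$ assignments as the corresponding component fields, compare Table~\ref{table1} and Table~\ref{table2}), the transcription introduces no new functionals and the three-term decomposition is exhaustive, completing the argument.
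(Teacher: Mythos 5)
Your proposal is correct and follows essentially the same route as the paper: the paper derives Proposition~\ref{prop2} precisely by applying the translation identities (\ref{intrepSF1}) and (\ref{intrepSF2}) term by term to the component-field decomposition (\ref{S0}) of Proposition~\ref{prop1}, with the signs working out exactly as you describe (the overall sign on the $Q_+Q_-$-exact piece being absorbed into the arbitrary $N_F=2$ functional ${\cal T}$). Your additional remarks on the invertibility of the superfield substitution and the preservation of the TILC/CLR property of the coefficient $C$ under the lift make explicit what the paper leaves implicit, but do not change the argument.
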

 

In Section~5, we will use a trick in which we lift constant parameters in the model, such as coupling constants, mass parameters and Wilson-term coefficients, to constant superfields by introducing constant super-partner for each constant parameter. Therefore we need an extended version of Proposition~\ref{prop2} including constant superfields.
Let us take $N_{CF}$ constant parameters $\rho^i_{\pm}$ ($i=1,\cdots,N_{CF}$) as constant fields, and their super-partners $\zeta^i_{\rho \pm}$ ($i=1,\cdots, N_{CF}$) as well.
Then $Q_{\pm},K_{\pm},N_{\pm}$ are modified to include constant fields:
\begin{equation}
Q_{\pm}'\equiv Q_{\pm} \pm \sum_{i}\zeta^i_{\rho \pm}\frac{\partial}{\partial \rho^i_{\pm}},~
K_{\pm}'\equiv K_{\pm} \pm \sum_{i}\rho^i_{\pm}\frac{\partial}{\partial \zeta^i_{\rho \pm}},~
N_{\pm}'\equiv N_{\pm} + \sum_{i}(\rho^i_{\pm}\frac{\partial}{\partial \rho^i_{\pm}}+ 
\zeta^i_{\rho \pm} \frac{\partial}{\partial \zeta^i_{\rho \pm}}),
\label{extension}
\end{equation}
\noindent
and their algebra is the same as before: 
\begin{eqnarray}
&\{Q'_{\pm},K'_{\pm}\}=N'_{\pm},\nonumber \\
& \left[Q'_{\pm},N'_{\pm}\right]=[Q'_{\pm},N'_{\mp}]=[N'_{\pm},N'_{\mp}]=
[K'_{\pm},N'_{\pm}]=[K'_{\pm},N'_{\mp}]=0, \nonumber \\
&\{Q'_{\pm},K'_{\mp}\}=\{K'_+,K'_-\}=
Q'^2_{\pm}=K'^2_{\pm}=0.
\label{QKRalgebra2}
\end{eqnarray}

We assign $N_F=0$ for $\rho^i_{\pm}$ and $-1$ for $\zeta^i_{\rho \pm}$.  
Then, $\pm$type local functionals are defined by new operators as  
\begin{equation}
N_{\mp}' {\cal O}_{\pm}=0, \qquad N_{\pm}' {\cal O}_{\pm} \ne 0 .
\end{equation}
\noindent
We should notice that the constant fields do not depend on lattice site, so the coefficient appeared in functionals only depend on the site index of ordinary fields to which the notion of locality refers.

From (\ref{extension}), we can easily show an extended version of the fundamental theorem as follows,
\begin{theorem}
Let ${\cal P}'_{\pm}$  be  any $\pm$type local functional with $N_F=1$.
If 
\begin{equation}
Q'_{\mp} {\cal P}'_{\pm} =0, 
\end{equation}
\noindent
then 
\begin{equation}
{\cal P}'_{\pm}= (\pm{\rm type~  CLR~ extended~ terms~with~}N_F=1) + Q'_{\mp}{\cal Q}'_{\pm} ,
\end{equation}
\noindent
where    $\pm$type CLR extended terms  with $N_F=1$ of $k+1$-th order are expressed as 
\begin{equation}
\sum_{mn_1\cdots n_k} C'_{mn_1\cdots n_k} \chi_{\pm m} \phi_{\pm n_1} \cdots  \phi_{\pm n_k} ,
\label{CLRterm2}
\end{equation}
\noindent
and the coefficient $C'$ is a TILC  satisfying the CLR condition and may depend on  
constant fields $\rho_{\pm}$.  ${\cal Q}'_{\pm}$ are $N_F=2$ local functionals which may depend on
$\rho^i_{\pm},\zeta^i_{\rho \pm}$. 
\end{theorem}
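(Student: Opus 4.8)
The plan is to reduce the extended theorem to the original Fundamental Theorem (Theorem~\ref{fthm}) rather than re-deriving everything from scratch. The essential observation is that the primed operators $Q'_\pm,K'_\pm,N'_\pm$ defined in (\ref{extension}) satisfy exactly the same algebra (\ref{QKRalgebra2}) as the unprimed ones, and in particular $\{Q'_\pm,K'_\pm\}=N'_\pm$ together with $\{Q'_\mp,K'_\pm\}=0$. So the whole cohomological machinery that produced Theorem~\ref{fthm} goes through verbatim once one treats $\rho^i_\pm$ as additional ``$\pm$-fields'' of $N_F=0$ and $\zeta^i_{\rho\pm}$ as additional ``$\pm$-fields'' of $N_F=-1$. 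The only structural difference from the original setup is that the constant fields carry no lattice index, but this is harmless: as the excerpt already notes, locality refers only to the site indices of the ordinary fields, and the constant fields simply ride along as extra parameters inside the TILC coefficients.

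First I would state precisely the enlarged field content and record that the grading by $N'_+$ and $N'_-$ eigenvalues still gives a bigrading on the space of local functionals, with $Q'_\pm$ raising and $K'_\pm$ lowering the respective degree, exactly as before. Then I would invoke the homotopy identity: for a $\pm$type functional ${\cal P}'_\pm$ one has $N'_\pm{\cal P}'_\pm = n'_\pm {\cal P}'_\pm$ with $n'_\pm\neq 0$ by definition of $\pm$type, so $\{Q'_\pm,K'_\pm\}=N'_\pm$ makes the $Q'_\pm$-direction cohomologically trivial, while $\{Q'_\mp,K'_\pm\}=0$ ensures $K'_\pm$ maps $Q'_\mp$-closed functionals to $Q'_\mp$-closed functionals. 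This is the verbatim analogue of (\ref{delta+})--(\ref{delta-}). From here the argument that peels off the CLR piece and writes the remainder as $Q'_\mp$-exact is literally the content of Theorem~\ref{fthm} applied in the enlarged space; the CLR extended terms (\ref{CLRterm2}) are precisely the image of the original CLR terms (\ref{CLRterm}) under the enlargement, now allowed to carry $\rho_\pm$-dependent coefficients.

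The one point that genuinely requires care — and which I expect to be the main obstacle — is verifying that the CLR terms still exhaust the $Q'_\mp$-cohomology after the constant fields are added, i.e.\ that introducing $\rho_\pm$ and $\zeta_{\rho\pm}$ does not create new cohomology classes outside the CLR family. The worry is whether a functional built partly from $\zeta_{\rho\pm}$ (which has $N_F=-1$) could furnish a $Q'_\mp$-closed but non-exact object of $N_F=1$ that is not of CLR type. I would dispose of this by the same homotopy operator: any dependence on $\zeta^i_{\rho\pm}$ can be traded, via $\{Q'_\pm,K'_\pm\}=N'_\pm$ acting in the constant-field sector, for $Q'_\mp$-exact pieces plus terms with no explicit $\zeta$, reducing to the case where the only effect of the enlargement is to let $\rho_\pm$ appear as spectator parameters in the TILC coefficient $C'$. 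Because $\rho_\pm$ carry $U(1)_R$ and $U(1)$ charges assigned for the holomorphy argument, the top-$R$-charge characterization of CLR terms noted after (\ref{CLRterm}) survives, and the classification closes.

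Concretely, then, the proof structure I would write is: (i) enlarge the field space and confirm the bigrading and the homotopy relations from (\ref{QKRalgebra2}); (ii) apply the triviality of the $Q'_\pm$-direction to write ${\cal P}'_\pm$ as $Q'_\pm K'_\pm {\cal P}'_\pm/n'_\pm$ when needed, and use $\{Q'_\mp,K'_\pm\}=0$ to stay inside $Q'_\mp$-closed functionals; (iii) run the original Fundamental Theorem in the enlarged space to split ${\cal P}'_\pm$ into a CLR extended term plus a $Q'_\mp$-exact remainder ${\cal Q}'_\pm$; and (iv) check that $\rho_\pm$- and $\zeta_{\rho\pm}$-dependence only enriches the coefficients without enlarging the cohomology, invoking the charge assignments so that the CLR condition on $C'$ is inherited unchanged. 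The bulk of the work is the bookkeeping in step (iv); steps (i)--(iii) are a direct transcription of the unprimed argument.
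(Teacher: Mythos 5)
Your overall strategy---enlarge the field space, observe that $Q'_\pm,K'_\pm,N'_\pm$ obey the same algebra (\ref{QKRalgebra2}), and rerun the argument behind Theorem~\ref{fthm} with the constant fields riding along---is the paper's own route, and it works for the $\rho$-dependence: since $Q'_\mp$ annihilates $\rho^i_\pm$ and acts as $Q_\mp$ on the ordinary $\pm$-fields, one may expand in powers of $\rho_\pm$ and apply the unprimed argument coefficient by coefficient, the constant fields carrying no site index so that locality is untouched.

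The genuine gap is in your step (iv), which you yourself flag as the crux. You propose to remove the $\zeta^i_{\rho\pm}$-dependence ``via $\{Q'_\pm,K'_\pm\}=N'_\pm$ acting in the constant-field sector,'' producing ``$Q'_\mp$-exact pieces.'' That identity is the homotopy for the \emph{same-sign} cohomology: applied to a $+$type functional it produces $Q'_+$-exact pieces, never $Q'_-$-exact ones, so it cannot do the job; and the homotopy that would be relevant, $\{Q'_-,K'_-\}=N'_-$, vanishes identically on $+$type functionals---which is exactly why the cohomology there is nontrivial and why the Appendix~C analysis is needed at all. Nor can ``running the original Fundamental Theorem in the enlarged space'' close this hole: expanding ${\cal P}'_+$ in powers of $\zeta^i_{\rho+}$, the coefficient of a degree-$j$ monomial ($j\ge 1$) is an ordinary-field functional with fermion number $N_F=1+j\ge 2$, which lies outside the scope of Theorem~\ref{fthm} (it treats only $N_F=1$). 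What is actually required---and what the paper's remark that $\zeta^i_{\rho\pm}$ ``does not contribute to nontrivial CLR terms with $N_F=1$, like $\bar\chi_\pm$'' encodes---is to redo the Appendix~C computation with $\zeta_{\rho\pm}$ included as an inert, constant, $N_F=-1$ field on the same footing as $\bar\chi_\pm$: in every sector containing a $\zeta$ the ordinary-field part has at least two $\chi$'s, so the closedness conditions carry a spectator fermionic index, the Appendix~B machinery then solves them in $\nabla$-exact form, and all such terms are absorbed into $Q'_\mp{\cal Q}'_\pm$; only the $\zeta$-free, $\bar\chi$-free, $F$-free sector survives, giving the CLR extended terms (\ref{CLRterm2}) with $\rho_\pm$-dependent TILCs. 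Your write-up needs this argument (or, equivalently, an independent proof that the $Q_\mp$-cohomology of $\pm$type local functionals is trivial for $N_F\ge 2$) in place of the invalid homotopy step.
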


\noindent
The proof goes almost same way as that of the previous Theorem~1
with a care of the fact that $\rho^i_{\pm},\zeta^i_{\rho \pm}$ are invariant under $\delta_{\mp}$. 
We also note that $\zeta^i_{\rho \pm}$ does not contribute 
to nontrivial CLR terms with $N_F=1$,  like  $\bar{\chi}_{\pm}$. 

From this theorem, we can prove two additional propositions:

\begin{proposition}
 A nilpotent-SUSY invariant local functional $S'$
with $N_F=0$ is generally written in the form
\begin{eqnarray}
S'= Q'_{+}Q'_{-} T'(+,-)&+&Q'_+({\rm +type~ CLR~extended~ terms~with~}N_F=1)\nonumber \\
 &+&Q'_- ({\rm -type~ CLR~extended~ terms~with~}N_F=1), 
 \label{S03}
\end{eqnarray}
\noindent
where $T'(+,-)$ is  a local functional with $N_F=2$.
\label{prop3}
\end{proposition}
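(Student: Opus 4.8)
The plan is to prove Proposition~\ref{prop3} by following the structure of the proof of Proposition~\ref{prop1}, but now working with the primed operators $Q'_\pm,K'_\pm,N'_\pm$ which satisfy the identical algebra (\ref{QKRalgebra2}). The essential observation is that every algebraic input used in the unprimed proof carries over verbatim: we have $\{Q'_\pm,K'_\pm\}=N'_\pm$, the nilpotency $Q'^2_\pm=0$, and the anticommutation $\{Q'_\pm,K'_\mp\}=0$. Since the extended theorem stated just above plays exactly the role that Theorem~\ref{fthm} played before, the argument is a mechanical transcription with primes attached and ``CLR terms'' replaced by ``CLR extended terms''.

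First I would decompose $S'$ into sectors of definite $(n_+,n_-)$, where $n_\pm$ are now the eigenvalues of the extended operators $N'_\pm$; this is legitimate because $Q'_\pm$ commute with $N'_\pm$ and hence preserve these gradings, and the constant fields $\rho^i_\pm,\zeta^i_{\rho\pm}$ are accounted for inside $N'_\pm$ by construction (\ref{extension}). For the sector with $n_+\neq0$ and $n_-\neq0$, I would invoke $\{Q'_\pm,K'_\pm\}=N'_\pm$ together with $\{Q'_\pm,K'_\mp\}=0$ to write the invariant piece $\mathcal T'$ as $Q'_-Q'_+\big(K'_+K'_-\mathcal T'/(n_+n_-)\big)$, exactly as in (\ref{Tterm}); note that $K'_\pm$ again map local functionals to local functionals, so locality is preserved. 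For the sector with $n_-=0$, $n_+\neq0$, $N_F=0$, I would set $\mathcal P'_+\equiv K'_+\mathcal U'_+/n_+$, observe that $Q'_-\mathcal U'_+=0$ combined with $\{Q'_-,K'_+\}=0$ forces $Q'_-\mathcal P'_+=0$, and then apply the extended theorem to split $\mathcal P'_+$ into a $+$type CLR extended term plus $Q'_-\mathcal Q'_+$. The mirror argument handles $n_+=0,n_-\neq0$.

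Finally I would reassemble the pieces and use nilpotency and the anticommutators to reorganize the doubly-exact terms into the single form $Q'_+Q'_-T'(+,-)$, collecting $T'$ from the $K'_+K'_-\mathcal T'/(n_+n_-)$ contribution and the $\mathcal Q'_\pm$ contributions with appropriate signs, just as in the passage from the first line to the last line of (\ref{S}). This yields precisely (\ref{S03}).

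The step I expect to require the most care is the application of the extended theorem in the mixed constant-field setting: one must check that the constant superpartners $\zeta^i_{\rho\pm}$ behave like $\bar\chi_\pm$ in that they cannot enter a nontrivial CLR extended term with $N_F=1$ (a point the excerpt flags explicitly), and that ``locality'' still refers only to the site indices of the genuine lattice fields since the constant fields carry no site dependence. Provided these bookkeeping facts hold—which the extended theorem already asserts—the remaining manipulations are purely algebraic and identical in form to the unprimed case, so no genuinely new obstacle arises beyond verifying that the primed algebra (\ref{QKRalgebra2}) truly mirrors (\ref{QKRalgebra}).
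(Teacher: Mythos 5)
Your proposal is correct and takes essentially the same approach as the paper: the paper derives Proposition~\ref{prop3} from the extended fundamental theorem exactly as Proposition~\ref{prop1} was derived from Theorem~\ref{fthm}, i.e.\ the same sector decomposition under $N'_\pm$, the same use of $\{Q'_\pm,K'_\pm\}=N'_\pm$ and $\{Q'_\mp,K'_\pm\}=0$, and the same reassembly into a doubly-exact piece plus CLR extended terms. The bookkeeping caveats you flag (that $\zeta^i_{\rho\pm}$ cannot enter nontrivial CLR extended terms, like $\bar\chi_\pm$, and that locality refers only to genuine lattice-site indices) are precisely the points the paper itself notes when stating the extended theorem.
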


By introducing two kinds of constant superfields $\rho^i_{\pm}(\theta_{\pm}) 
\equiv \rho^i_{\pm} \pm \theta_{\pm}\zeta^i_{\rho \pm}$ and $\zeta^i_{\rho \pm}(\theta_{\pm})\equiv\zeta^i_{\rho \pm}$, 
the action of $Q_{\pm}'$ on any superfield functional ${\cal O}'$ is expressed simply as
\begin{equation}
Q_{\pm}' {\cal O}' = \frac{\partial}{\partial \theta_{\pm}} {\cal O}' .
\label{newSUSYSF}
\end{equation}
\noindent
Similarly to (\ref{intrepSF}),  this   expression (\ref{newSUSYSF}) enables us 
to replace component fields in Proposition 3 to  
superfields.

\begin{proposition}
A nilpotent-SUSY invariant $N_F=0$ local functional of 
superfields   can be written as  
\begin{eqnarray}
S'=  \int d^2 \theta {\cal T}'(+,-)
&-& \int d^2 \theta \theta_- (+{\rm type~ CLR~extended~terms~with~}N_F=1)\nonumber \\
 &+&\int d^2 \theta  \theta_+(-{\rm type~ CLR~extended~terms~with~}N_F=1), 
 \label{S04}
\end{eqnarray}
\noindent
where the above $\pm{\rm type~ CLR~extended~ terms~with~}N_F=1$   of $k+1$-th order is defined as 
\begin{equation}
\sum_{mn_1\cdots n_k} C'_{mn_1\cdots n_k}(\rho^i_{\pm}\big(\theta_{\pm})\big) \Psi_{\pm m} \Phi_{\pm n_1} \cdots  \Phi_{\pm n_k} ,
\label{CLRtermSF4}
\end{equation}
\noindent
and the coefficient $C'$ is a TILC  satisfying the CLR condition and may depend on  constant superfields 
$\rho^i_{\pm}(\theta_{\pm})$.
\label{prop4}
\end{proposition}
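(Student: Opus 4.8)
The plan is to mirror exactly the route that carried Proposition~\ref{prop1} into Proposition~\ref{prop2}, but now starting from the extended component decomposition of Proposition~\ref{prop3} and working with the primed operators. The essential new input is the simple representation (\ref{newSUSYSF}), $Q'_{\pm}{\cal O}'=\partial{\cal O}'/\partial\theta_{\pm}$. I would first confirm that this holds uniformly by checking the constant superfields: from the definitions in (\ref{extension}) one reads off $Q'_{\pm}\rho^i_{\pm}=\pm\zeta^i_{\rho\pm}$ and $Q'_{\pm}\zeta^i_{\rho\pm}=0$, which match $\partial_{\theta_{\pm}}\rho^i_{\pm}(\theta_{\pm})=\pm\zeta^i_{\rho\pm}$ and $\partial_{\theta_{\pm}}\zeta^i_{\rho\pm}(\theta_{\pm})=0$ for $\rho^i_{\pm}(\theta_{\pm})=\rho^i_{\pm}\pm\theta_{\pm}\zeta^i_{\rho\pm}$ and $\zeta^i_{\rho\pm}(\theta_{\pm})=\zeta^i_{\rho\pm}$. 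Hence $Q'_{\pm}$ is realized as $\partial/\partial\theta_{\pm}$ on every superfield, ordinary and constant alike.

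Next I would set up the extended integral representation: any $N_F=0$ component functional $O'$ of the original and constant fields equals $\int d^2\theta\,\theta_+\theta_-\,{\cal O}'={\cal O}'|_{\theta_{\pm}=0}$, where ${\cal O}'$ is obtained by promoting each component field to its superfield and each $\rho^i_{\pm}$ to $\rho^i_{\pm}(\theta_{\pm})$. This is valid because the $\theta_{\pm}=0$ reductions of all superfields return the original component fields. Combining this with (\ref{newSUSYSF}) and the Grassmann rules (\ref{integralformula}), the very same algebra that produced (\ref{intrepSF1}) and (\ref{intrepSF2}) yields the primed analogues $Q'_{\pm}O'=\mp\int d^2\theta\,\theta_{\mp}\,{\cal O}'$ and $Q'_+Q'_-O'=-\int d^2\theta\,{\cal O}'$.

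Then I would apply these three identities termwise to the decomposition (\ref{S03}). The $Q'_+Q'_-T'(+,-)$ piece becomes $\int d^2\theta\,{\cal T}'(+,-)$ after absorbing the overall sign into the definition of ${\cal T}'$, exactly as in the passage from (\ref{S0}) to (\ref{S01}); the $Q'_+(\text{+type CLR extended})$ piece becomes $-\int d^2\theta\,\theta_-(\cdots)$; and the $Q'_-(\text{-type CLR extended})$ piece becomes $+\int d^2\theta\,\theta_+(\cdots)$, reproducing the three-line structure of (\ref{S04}). It remains only to identify the superfield content of the CLR pieces: the component term (\ref{CLRterm2}), namely $\sum C'_{mn_1\cdots n_k}(\rho_{\pm})\,\chi_{\pm m}\phi_{\pm n_1}\cdots\phi_{\pm n_k}$, lifts to (\ref{CLRtermSF4}) under $\chi_{\pm}\to\Psi_{\pm}$, $\phi_{\pm}\to\Phi_{\pm}$ and $\rho^i_{\pm}\to\rho^i_{\pm}(\theta_{\pm})$, keeping the same TILC coefficient obeying the CLR.

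I expect the single genuinely nontrivial point — hence the main obstacle — to be that last promotion $C'(\rho_{\pm})\to C'(\rho_{\pm}(\theta_{\pm}))$: one must check that the extra $\theta$-dependence carried by $\rho^i_{\pm}(\theta_{\pm})$ is precisely what is required so that $Q'_{\pm}$, acting as $\partial/\partial\theta_{\pm}$, reproduces the $\zeta^i_{\rho\pm}$-terms generated when $Q'_{\pm}$ strikes the $\rho$-dependence of the coefficient in the component picture. This is consistent because $\zeta^i_{\rho\pm}$ carries $N_F=-1$ and, like $\bar\chi_{\pm}$, cannot stand alone in a nontrivial $N_F=1$ CLR term; it enters only through the packaging $\rho^i_{\pm}(\theta_{\pm})$, so the $\theta_{\pm}=0$ reduction recovers (\ref{CLRterm2}) while the full superfield expression encodes the $Q'_{\pm}$-variation automatically. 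Once this bookkeeping is in place, reading off (\ref{S04}) from (\ref{S03}) is immediate, and the argument is a near-verbatim repetition of the derivation of Proposition~\ref{prop2} from Proposition~\ref{prop1}.
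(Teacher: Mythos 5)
Your proposal is correct and takes essentially the same route as the paper: the paper obtains Proposition~\ref{prop4} from Proposition~\ref{prop3} precisely by observing that $Q'_{\pm}$ acts as $\partial/\partial\theta_{\pm}$ on superfield functionals built from the ordinary and constant superfields (eq.~(\ref{newSUSYSF})), and then translating the decomposition (\ref{S03}) term by term via the primed analogues of (\ref{intrepSF1}) and (\ref{intrepSF2}), exactly as Proposition~\ref{prop2} was derived from Proposition~\ref{prop1}. Your explicit check of (\ref{newSUSYSF}) on $\rho^i_{\pm}(\theta_{\pm})$, $\zeta^i_{\rho\pm}(\theta_{\pm})$ and your discussion of the promotion $C'(\rho_{\pm})\to C'\big(\rho_{\pm}(\theta_{\pm})\big)$ merely spell out details the paper leaves implicit.
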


In summary of this section, all nilpotent-SUSY invariant local functionals with $N_F=0$ are classified   
into type-I and  type-II. The latter contains only 
a linear combination of CLR terms (\ref{CLRterm}), (\ref{CLRtermSF}), (\ref{CLRterm2}) 
and (\ref{CLRtermSF4}) as a consequence of  $Q_{\mp}$-cohomology. 
It is a surprising result that there is no nilpotent-SUSY invariant  type-II
local functional with $N_F=1$ 
except CLR terms.
For other sectors, we can find nontrivial cohomology elements in $N_F<1$ sector due to the existence of $S_{\pm}$ and $\Phi_{\pm}$, while in $N_F>1$ sector there is no nilpotent-SUSY invariant  type-II local functional.

\section{Quantum effects for the model}
Before discussing non-renormalization theorem in the subsequent section, let us look at the quantum correction explicitly, say, in one-loop level.
The kinetic term, which is type-I functional, indeed get contributions from one-loop diagrams. On the other hand, the terms given by the type-II functionals such as mass terms and the interaction terms have no contributions from one-loop diagrams. Even for two-loop level there is no quantum correction to the type-II terms due to the CLR. 

Let us recapitulate the action for the kinetic term $S_0$ and the mass term $S_m$ defined in subsection~2.2:
\begin{eqnarray}
\!\!\!S_0&=&\langle\nabla \phi_-,\nabla \phi_+\rangle+i\langle\nabla\bar{\chi}_-,\chi_+\rangle 
- i\langle{\chi}_-,\nabla\bar{\chi}_+\rangle
  - \langle F_-,F_+\rangle,   \nonumber \\
  &=&  \int d^2\theta \langle \Psi_-,\Psi_+\rangle  ,
\label{kinetic-term2}
\end{eqnarray}
\noindent 
\begin{eqnarray}
\,\,S_m &=&   \langle F_+,G_+\phi_+\rangle - \langle\chi_+,G_+\bar{\chi}_+\rangle 
 + \langle F_-,G_-\phi_-\rangle+\langle\chi_-,G_-\bar{\chi}_-\rangle  \nonumber \\
 &=& -\int d^2 \theta \theta_- \langle \Psi_+, G_+\Phi_+ \rangle 
+ \int d^2 \theta \theta_+ \langle \Psi_-, G_-\Phi_- \rangle ,
\end{eqnarray}
\noindent
where $G_{\pm}$ include Wilson terms (\ref{Wilsonterm-1}),(\ref{Wilsonterm-2}). 

We could think the mass term as one of the interaction terms and might pursuit our perturbative calculation with massless propagator. In order to avoid infrared divergences, instead, we use perturbation with massive propagator. 
We denote a symbol $\langle\cdots\rangle_0 $ as an expectation value  in the tree-level. 
By defining $D_{mn}=D_{nm}\equiv(\nabla^T\nabla + G_-G_+)^{-1}_{mn}$, the tree-level propagators can be written as
\begin{eqnarray}
\langle \phi_{\mp m} \phi_{\pm n}\rangle_0 &=& 
D_{mn},\qquad\quad\langle\phi_{\pm m}\phi_{\pm n}\rangle_0 = 0 ,\nonumber \\
\langle\chi_{\mp m} \bar{\chi}_{\pm n}\rangle_0&=& -i (\nabla D)_{mn},\quad
\langle\chi_{\pm m}\bar{\chi}_{\pm n}\rangle_0=  
\pm(G_{\mp}D)_{mn}, \nonumber \\
\langle F_{\mp m} F_{\pm n} \rangle_0 &=& - (\nabla D \nabla^T)_{mn},~\langle F_{\pm m} \phi_{\pm n} \rangle_0
=  (G_{\mp}D)_{mn},
\label{2point}
\end{eqnarray}
where $(\nabla^T)_{nm}=(\nabla)_{mn}$.
\noindent
Note that our mass operators $G_+,G_-=(G_+)^{\dagger}$ include Wilson terms, thus have site-dependence in general. 
In terms of superfields, relevant 2-point functions are
\begin{eqnarray}
\langle \Phi_{\mp m}(\theta)\Phi_{\pm n}(\theta')  \rangle_0 &=&D_{mn},~
\langle \Phi_{\pm m}(\theta)\Phi_{\pm n}(\theta')  \rangle_0 =0,~\nonumber \\
\langle \Psi_{\mp m}(\theta) \Phi_{\pm n}(\theta') \rangle_0 &=& 
\mp i\delta(\theta_{\pm} -\theta'_{\pm})(\nabla D)_{mn}, \nonumber \\
 \langle \Psi_{\pm m}(\theta) \Phi_{\pm n}(\theta') \rangle_0 &= &
 \delta(\theta_{\pm} - \theta'_{\pm}) (G_{\mp}D)_{mn},
  \nonumber \\
\langle \Psi_{\mp m}(\theta) \Psi_{\pm n}(\theta') \rangle_0 &=&  
i\delta^2(\theta-\theta')(\nabla D\nabla^T)_{mn}, \nonumber \\
 \langle \Psi_{\pm m} (\theta)\Psi_{\pm n}(\theta') \rangle_0 &=&  
i \delta^2(\theta-\theta')(\nabla DG^T_{\mp})_{mn}
=-i \delta^2(\theta-\theta')(G_{\mp} D\nabla^T)_{mn},
\label{2pointSF}
\end{eqnarray}
\noindent
where $\delta$-functions for the Grassmann variables are defined as   $\delta (\theta_{\pm}) \equiv \theta_{\pm},\delta^2(\theta) \equiv \theta_+\theta_-$.

For the interaction term, we only consider three-point interaction for simplicity:
\begin{eqnarray}
S_{int}  &=&  \lambda_+(\langle F_+,\phi_+*\phi_+\rangle-2 \langle\chi_+,\bar{\chi}_+*\phi_+\rangle) \nonumber\\
 && +\lambda_-(\langle F_-,\phi_-*\phi_-\rangle+2 \langle\chi_-,\bar{\chi}_-*\phi_-\rangle) \nonumber \\
 & = & -\lambda_+ \int d^2\theta \theta_- \langle \Psi_+,\Phi_+ * \Phi_+\rangle
 + \lambda_- \int d^2\theta \theta_+ \langle \Psi_-,\Phi_- * \Phi_-\rangle, 
\end{eqnarray}
\noindent
where $\lambda_{\mp}=\lambda_{\pm}^*$ and the coefficient $M$ in the definition of $*$ product (\ref{*product}) is a TILC satisfying the CLR condition,
\begin{eqnarray}
\sum_k(\nabla_{k \ell} M_{kmn}  + \nabla_{km} M_{kn\ell}+ \nabla_{kn} M_{k\ell m} )
=\sum_k(\nabla_{\ell k}^T M_{k mn}  + \nabla_{mk}^T M_{kn\ell}+ \nabla_{nk}^T M_{k\ell m}) =0. && \nonumber \\
&&
\label{CLR2}
\end{eqnarray}

\subsection{Corrections for kinetic terms}

One-loop corrections to the kinetic term $F_-F_+$, for instance, are given by Figure~\ref{fig1}.
\begin{figure}[htbp] 
\begin{center}
\includegraphics[scale=.5]{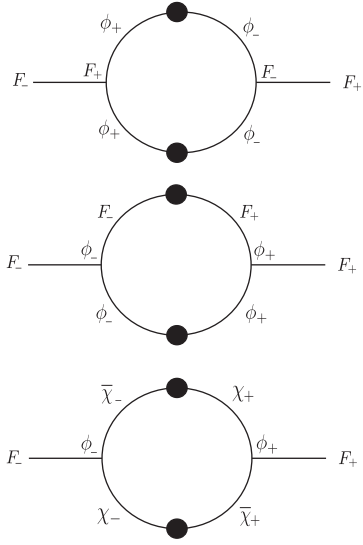}
\caption{One-loop corrections  for $F_- {F}_+$. }
\label{fig1}
\end{center}
\end{figure}         
Here internal lines with filled circles symbolize massive propagators.

 At the one-loop level, we  can write down the $F_-F_+$ propagator with correction in the form
\begin{eqnarray}
\langle F_{-m} F_{+n} \rangle = &&  \langle F_{-m} F_{+n} \rangle_0 
+  \langle F_{-m} F_{+n'} \rangle_0  \Sigma_{FF}^{+-}(n',m') \langle F_{-m'} F_{+n} \rangle_0 \nonumber \\
&& + \langle F_{-m} \phi_{-n'} \rangle_0
 \Sigma_{\phi\phi}^{-+}(n',m')
 \langle \phi_{+m'} F_{+n} \rangle_0  
\end{eqnarray}
\noindent
and the self-energies at the one-loop are explicitly given by 
\begin{eqnarray}
\Sigma_{FF}^{+-}(n,m) &=& 2\lambda_+\lambda_-M_{nk\ell}M_{mij}D_{ik}D_{j\ell} ,\label{FFself}
\end{eqnarray}
\noindent
and 
\begin{eqnarray}
 \Sigma_{\phi\phi}^{-+}(n,m) &= &-4\lambda_+\lambda_-M_{\ell kn}M_{ijm}(\nabla D \nabla^T)_{\ell i} D_{kj} \nonumber \\
 && - 4\lambda_+\lambda_-M_{k\ell n}M_{ijm}(\nabla D)_{kj} (D\nabla^T)_{\ell i} \nonumber \\
  &=&4\lambda_+\lambda_-(M\nabla)_{nk\ell }(\nabla^TM)_{ijm}D_{\ell i} D_{kj} \nonumber \\
  &=& -2  \lambda_+\lambda_-(M\nabla)_{nk\ell }(\nabla^TM)_{mij}D_{j\ell} D_{ik} \nonumber \\
  &=&     - (\nabla^T \Sigma_{FF}^{+-}\nabla)(n,m)             ,
\end{eqnarray}
\noindent
where we have used the symmetric property of $D_{mn}=D_{nm}$ in (\ref{2point}) and  the CLR 
relation  (\ref{CLR2}).  Short-hand notations $\nabla_{\ell k}M_{\ell mn}\equiv(\nabla^TM)_{kmn}$ and 
$M_{k\ell m}\nabla_{mn}=M_{km\ell}\nabla_{mn}\equiv(M\nabla)_{k\ell n}$ are also used. 

\begin{figure}[htbp] 
\begin{center}
\includegraphics[scale=.5]{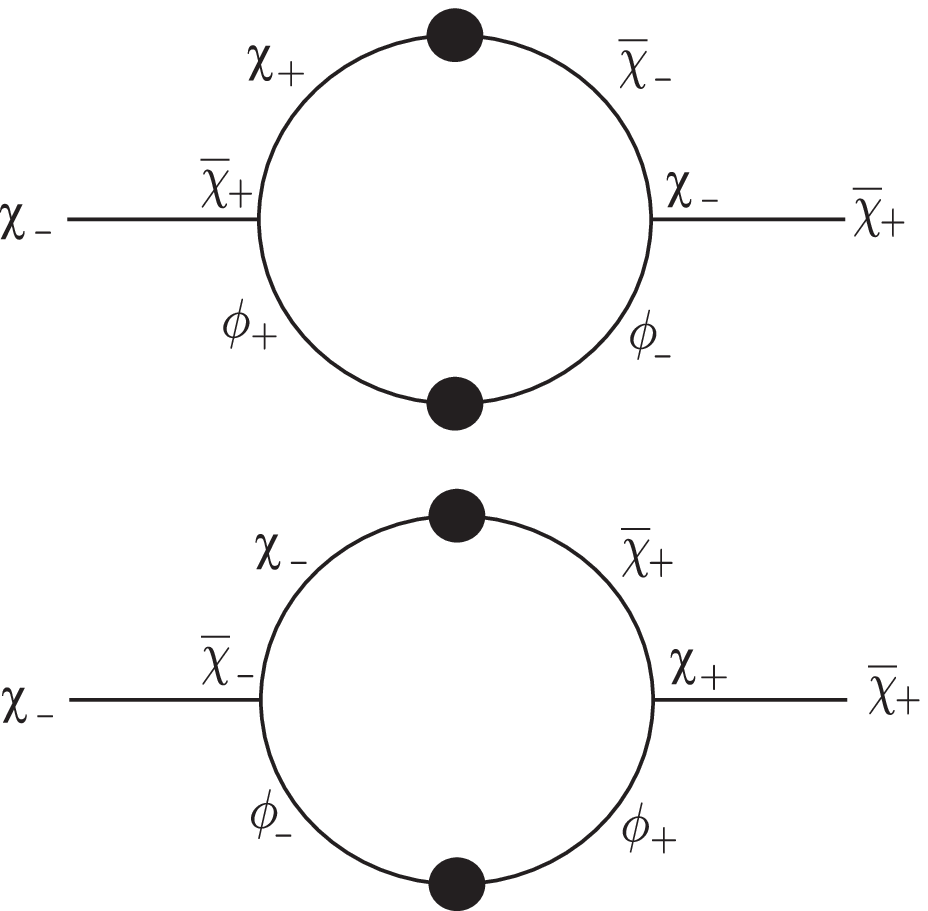}
\caption{One-loop corrections  for $\chi_- \bar{\chi}_+$. }
\label{fig2-1}
\end{center}
\end{figure}         
The $\chi_-\bar{\chi}_+$ propagator is  evaluated  at the one-loop level as 
\begin{eqnarray}
\langle \chi_{-m} \bar{\chi}_{+n} \rangle &=&   \langle {\chi}_{-m} \bar{\chi}_{+n} \rangle_0 
+  \langle \chi_{-m} \bar{\chi}_{+n'} \rangle_0
 \Sigma_{\bar{\chi}\chi}^{+-}(n',m')
 \langle \chi_{-m'} \bar{\chi}_{+n} \rangle_0 \nonumber \\
  && + \langle \chi_{-m} \bar{\chi}_{-n'} \rangle_0
 \Sigma_{\bar{\chi}\chi}^{-+}(n',m')
 \langle \chi_{+m'} \bar{\chi}_{+n} \rangle_0 ,
\end{eqnarray}
\noindent
where
\begin{eqnarray}
\Sigma_{\bar{\chi}\chi}^{+-}(n,m) &=& 4i\lambda_+\lambda_-M_{kn\ell}M_{mij}(D\nabla)_{ik}D_{j\ell} \nonumber \\
 & = & -2i\lambda_+\lambda_-(M\nabla)_{nk\ell}M_{mij}D_{ik}D_{j\ell} \nonumber \\
 & = & -i(\nabla^T\Sigma_{FF}^{+-})(n,m) .
\end{eqnarray}
\noindent
In this calculation the  CLR (\ref{CLR2}) is important. 
Similarly, 
\begin{eqnarray}
 \Sigma_{\bar{\chi}\chi}^{-+}(n,m) & = & 4i\lambda_+\lambda_-M_{\ell kn}M_{mij}(\nabla D)_{\ell j} D_{ki} \nonumber \\
& = & -i(\nabla^T\Sigma_{FF}^{+-})(n,m) = \Sigma_{\bar{\chi}\chi}^{+-}(n,m) .
\end{eqnarray}

\begin{figure}[htbp] 
\begin{center}
\includegraphics[scale=.5]{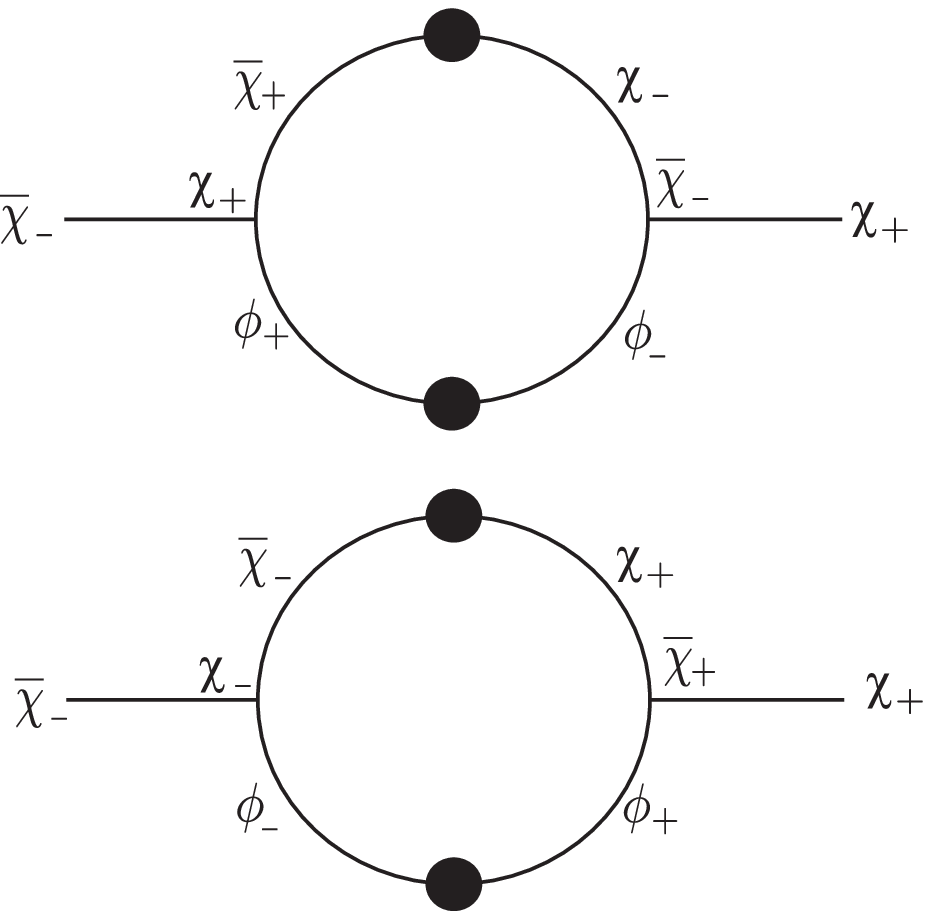}
\caption{One-loop corrections  for $\bar{\chi}_- {\chi}_+$. }
\label{fig3-1}
\end{center}
\end{figure}         
For $\bar{\chi}_-{\chi}_+  $   at the one-loop level, the propagator is evaluated as 
\begin{eqnarray}
\langle \bar{\chi}_{-m} {\chi}_{+n} \rangle &=&   \langle \bar{\chi}_{-m} {\chi}_{+n} \rangle_0 
+  \langle \bar{\chi}_{-m} {\chi}_{+n'} \rangle_0
 \Sigma_{{\chi}\bar{\chi}}^{+-}(n',m')
 \langle \bar{\chi}_{-m'} {\chi}_{+n} \rangle_0 \nonumber \\
  && + \langle \bar{\chi}_{-m} {\chi}_{-n'} \rangle_0
 \Sigma_{{\chi}\bar{\chi}}^{-+}(n',m')
 \langle \bar{\chi}_{+m'} {\chi}_{+n} \rangle_0 ,
\end{eqnarray}
\noindent
where 
\begin{equation}
\Sigma_{{\chi}\bar{\chi}}^{+-}(n,m) = 4i\lambda_+\lambda_-M_{nk\ell}M_{imj}(\nabla D)_{ik}D_{j\ell} ,
\label{chichi-1}
\end{equation}
\noindent
and 
\begin{equation}
 \Sigma_{{\chi}\bar{\chi}}^{-+}(n,m) = 4i\lambda_+\lambda_-M_{nk\ell}M_{jim}(\nabla D)_{j\ell} D_{ki} .
\label{chichi-2}
\end{equation}
\noindent
In use of the CLR  (\ref{CLR2}), self-energy parts (\ref{FFself}), (\ref{chichi-1}) and (\ref{chichi-2}) are related as 
\begin{equation}
\Sigma_{\chi\bar{\chi}}^{-+}(n,m) = \Sigma_{\chi\bar{\chi}}^{+-}(n,m)= -i (\Sigma_{FF}^{+-}\nabla)(n,m). 
\end{equation}

\begin{figure}[htbp] 
\begin{center}
\includegraphics[scale=.5]{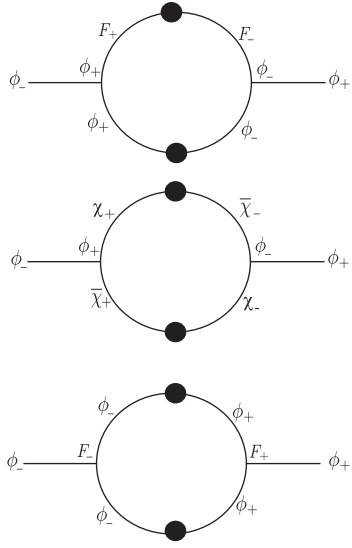}
\caption{One-loop corrections  for ${\phi}_- {\phi}_+$. }
\label{fig4}
\end{center}
\end{figure}         
Finally, the $\phi_-\phi_+$ propagator  at the one-loop level becomes 
\begin{eqnarray}
\langle \phi_{-m} \phi_{+n} \rangle &=&   \langle \phi_{-m} \phi_{+n} \rangle_0 +  \langle \phi_{-m} \phi_{+n'} \rangle_0
 \Sigma_{\phi\phi}^{+-}(n',m')
 \langle \phi_{-m'} \phi_{+n} \rangle_0 \nonumber \\
 && + \langle \phi_{-m} F_{-n'} \rangle_0
 \Sigma_{FF}^{-+}(n',m')
 \langle F_{+m'} \phi_{+n} \rangle_0 . 
\end{eqnarray}
\noindent

In summary,  we have evaluated one-loop contributions to the part of self-energy which mixes $\pm$-fields and thus to the type-I functionals, i.e.~kinetic terms.
Relevant diagrams are shown in Figures ~\ref{fig1}, 
\ref{fig2-1}, \ref{fig3-1} and \ref{fig4}.
\noindent 
Note that
\begin{equation}
\Sigma^{+-}_{FF}(n,m)=\Sigma^{+-}_{FF}(m,n),~\Sigma^{+-}_{\phi\phi}(n,m)=\Sigma^{+-}_{\phi\phi}(m,n) .
\end{equation}
\noindent
Also we can see each self-energy is related to each other 
by using  the CLR  (\ref{CLR2})  repeatedly, 
\begin{equation}
(\nabla^T\Sigma^{+-}_{FF})(n,m)=(\nabla^T\Sigma^{-+}_{FF})(n,m)
 =i (\Sigma^{\pm\mp}_{\bar{\chi}\chi})(n,m),
 \end{equation}
\begin{equation}
(\Sigma^{+-}_{FF}\nabla)(n,m)=(\Sigma^{-+}_{FF}\nabla)(n,m)
 =i (\Sigma^{\pm\mp}_{{\chi}\bar{\chi}})(n,m) ,
\end{equation}
\begin{equation}
\Sigma^{+-}_{\phi\phi}(n,m)=\Sigma^{-+}_{\phi\phi}(n,m)
 =-i (\Sigma^{+-}_{\bar{\chi}\chi}\nabla)(n,m) ,
\end{equation}
\begin{equation}
\Sigma^{+-}_{\phi\phi}(n,m)=
\Sigma^{-+}_{\phi\phi}(n,m)
 =-i (\Sigma^{+-}_{{\chi}\bar{\chi}}\nabla^T)(m,n) .
\end{equation}
\noindent
These equations  are exact relations thanks to the CLR and are nothing but a part of the  SUSY Ward-Takahashi identities.

\subsection{Quantum effects on type-II functionals}

Let us turn to the type-II functionals, the simplest of which is the mass term. So we have to evaluate self-energy parts which do not mix $\pm$-fields.
Immediate consequences we obtain are the vanishing of $\Sigma^{++}_{F\phi}$ and $\Sigma^{++}_{\chi\bar\chi}$ due to the vanishing of $\langle\phi_+\phi_+\rangle_0$ from (\ref{2point}). (We can easily see from Figure~\ref{fig5} that these self-energy parts are proportional to $\langle\phi_+\phi_+\rangle_0$.)
\begin{figure}[htbp] 
\begin{center}
\includegraphics[scale=.5]{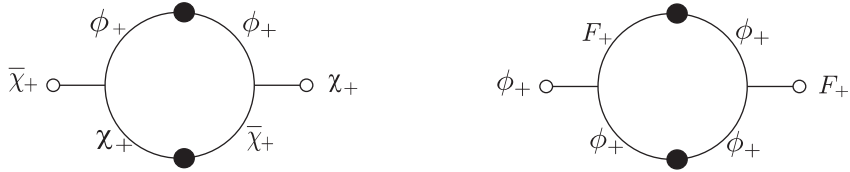}
\caption{Diagrams contributing to $\Sigma^{++}_{\chi\bar\chi}$ and
$\Sigma^{++}_{F\phi}$ where open circle stands for amputated propagator. Both diagrams are proportional to $\langle\phi_+\phi_+\rangle_0$.}
\label{fig5}
\end{center}
\end{figure}         

We emphasize here that not only pure mass term but also Wilson term have no quantum correction at one-loop (and actually any loop order as will be shown in next section), thus the Wilson parameter is unchanged.

The situation becomes a bit nontrivial in two-loop order.
\begin{figure}[htbp] 
\begin{center}\hspace{-30mm}
\includegraphics[scale=.5]{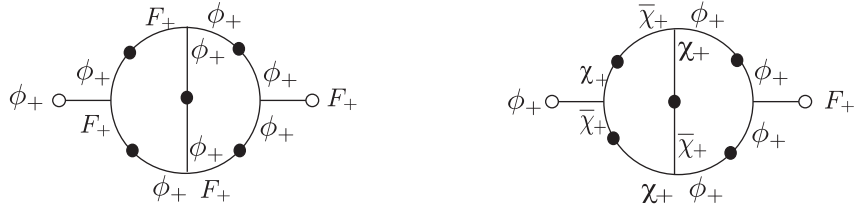}
\caption{$O(\lambda_+^4)$ contribution to $\Sigma^{++}_{F\phi}$ at the 2-loop level.  
These graphs are proportional to $(\langle \phi_+\phi_+\rangle_0)^2$.}
\label{fig7}
\end{center}
\end{figure}         
\begin{figure}[htbp] 
\begin{center}\hspace{-30mm}
\includegraphics[scale=.5]{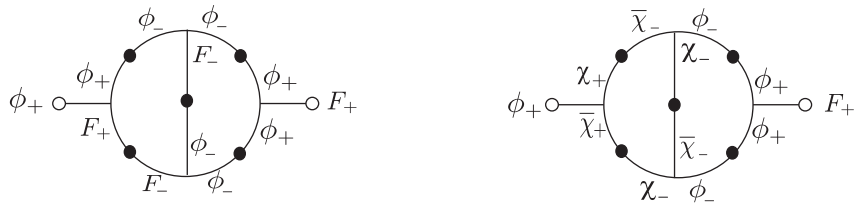}
\caption{$O(\lambda_+^2\lambda_-^2)$ contribution to $\Sigma^{++}_{F\phi}$ at 2-loop level.}
\label{fig8}
\end{center}
\end{figure}         
Actually, for the diagrams in Figure~\ref{fig7},  the correction for $F_+ \phi_+$ is proportional to $\lambda_+^4\langle  \phi_+ \phi_+ \rangle^2_0$ and 
thus it vanishes as before. On the other hand, diagrams in Figure~\ref{fig8} with $\lambda^2_+\lambda_-^2$ can be shown to be
proportional to the difference operator $\nabla$  by the CLR (\ref{CLR2}),  
\begin{eqnarray}
\Sigma^{++}_{F\phi}(n,m)|_{\mbox{\scriptsize 2-loop}}&=& -2^4  (\lambda_+\lambda_-)^2    \Big((\nabla^T M)_{inj} + (\nabla^T M)_{jni}\Big)
 M_{mk\ell}   \bar{M}_{abc}(\bar{M}\nabla)_{def}          \nonumber \\
&  & \times  D_{bj}D_{ck}D_{f\ell}D_{di}D^{-F\phi}_{ae}  \nonumber \\ 
& =&  2^4  (\lambda_+\lambda_-)^2  (\nabla^T M)_{nij}  M_{mk\ell}   \bar{M}_{abc}(\bar{M}\nabla)_{def} 
D_{bj}D_{ck}D_{f\ell}D_{di}D^{-F\phi}_{ae} ,\nonumber \\
&&
\end{eqnarray}
\noindent
where $  D_{ij}^{\pm F\phi} \equiv \langle  F_{\pm i} \phi_{\pm j}  \rangle_0$. 
Since $n$ is the site index  of external $\phi_+$, the correction to the effective action is proportional to $\nabla\phi_+$.  
Therefore, this effect can be shown to be  a 2-loop contribution to type-I functional.

\section{Non-renormalization  theorem on the lattice}
\subsection{Perturbative proof of the nonrenormalization theorem for type-II functionals}

%
%
%

%
In the previous section, we explicitly saw that there is no quantum correction in mass and Wilson terms at the one loop level, 
while the kinetic terms have nontrivial quantum corrections.  
In this section, we show the non-renormalization theorem
that the type-II terms  are not suffered from quantum
corrections at all.

Our starting action in the tree level is
%
\begin{eqnarray}
S = S_{\footnotesize \textrm{type-I}} + S_{\footnotesize \textrm{type-II}}\,,
\label{nonrenormalization_01}
\end{eqnarray}
%
where
%
\begin{eqnarray}
\hspace{-8mm}
S_{\footnotesize \textrm{type-I}}
 &=&  \int d\theta_{-}d\theta_{+} \langle \Psi_{-}\,,\,\Psi_{+}\rangle\,,
     \label{nonrenormalization_02} \\
\hspace{-8mm}
S_{\footnotesize \textrm{type-II}}
 &=& \int d\theta_{-}d\theta_{+}
     \Big\{ \theta_{-}\langle\,\Psi_{+}\,,\,
      W_{+}^{\footnotesize \textrm{tree}}(\Phi_{+}; v_{+}, m_{+}, \lambda_{+})\,\rangle
      \notag\\
 && \hspace{23mm}     
      -\, \theta_{+}\langle\,\Psi_{-}\,,\,
      W_{-}^{\footnotesize \textrm{tree}}(\Phi_{-}; v_{-}, m_{-}, \lambda_{-})\,\rangle 
     \Big\}
     \label{nonrenormalization_03}
\end{eqnarray}
with
%
\begin{eqnarray}
W_{\pm}^{\footnotesize \textrm{tree}}(\Phi_{\pm}; v_{\pm}, m_{\pm}, \lambda_{\pm})
 = -v_{\pm} - m_{\pm} G_{\pm}\Phi_{\pm} - \lambda_{\pm} \Phi_{\pm}*\Phi_{\pm}\,.
     \label{nonrenormalization_04}
\end{eqnarray}
%
Here we have a little bit generalized from the preceding sections by adding a linear term $v_{\pm}\Psi_{\pm}$ in the action as a type-II functional. The inclusion of the linear term is not a necessary thing but the argument below goes well either with or without  this term.
Also we have slightly changed our notation as $G_{\pm} \rightarrow m_{\pm}G_{\pm}$ from (\ref{Wilsonterm-1}), and so this is accompanied by the change of the Wilson coefficient $r_{\pm} \rightarrow r_{\pm}/m_{\pm}$.

Type-II part of the effective action with full quantum corrections can be separated into the sum of tree action $S_{\footnotesize \textrm{type-II}}$ and quantum correction $\Delta\Gamma^{\footnotesize \textrm{eff}}_{\footnotesize \textrm{type-II}}$
%
%
\begin{eqnarray}
\Gamma^{\footnotesize \textrm{eff}}_{\footnotesize \textrm{type-II}}
  = S_{\footnotesize \textrm{type-II}}
    + \Delta\Gamma^{\footnotesize \textrm{eff}}_{\footnotesize \textrm{type-II}}\,.
\label{nonrenormalization_05}
\end{eqnarray}
%

First of all, we are going to show that the latter should be of the form
%
\begin{eqnarray}
\hspace{-3mm}
\Delta\Gamma^{\footnotesize \textrm{eff}}_{\footnotesize \textrm{type-II}}
  &=& \int d\theta_{-}d\theta_{+}
     \Big\{ \theta_{-}\langle\,\Psi_{+}\,,\,
        W_{+}^{\footnotesize \textrm{eff}}(\Phi_{+}; v_{+}, m_{+}, \lambda_{+})\,\rangle
        \notag\\
  && \hspace{16.3mm}
           - \,\theta_{+}\langle\,\Psi_{-}\,,\,
        W_{-}^{\footnotesize \textrm{eff}}(\Phi_{-}; v_{-}, m_{-}, \lambda_{-})\,\rangle
             \Big\}. \hspace{4mm}
\label{nonrenormalization_06}
\end{eqnarray}
%
Here emphasis is on that $W_{+}^{\footnotesize \textrm{eff}}$
($W_{-}^{\footnotesize \textrm{eff}}$) depends only on
$\Phi_{+}, v_{+}, m_{+}, \lambda_{+}$ ($\Phi_{-}, v_{-}, m_{-}, \lambda_{-}$) 
but not on $\Phi_{-}, v_{-}, m_{-}, \lambda_{-}$ ($\Phi_{+}, v_{+}, m_{+}, \lambda_{+}$), and
this holomorphic property (even in the parameter dependence) will turn out to be crucial in the proof of the non-renormalization theorem. 

To this end, we replace the parameters $v_{\pm}$, $m_{\pm}$ and $\lambda_{\pm}$ 
by \textit{constant superfields}
such a way as \cite{Seiberg}%
\footnote{%
In Ref.\cite{Seiberg}, the mass parameters and the coupling constants have
been replaced by chiral superfields having the space-time coordinate-dependence.
Here, we assume that $v_{\pm}(\theta_{\pm})$, $m_{\pm}(\theta_{\pm})$ 
and $\lambda_{\pm}(\theta_{\pm})$
are the functions of the Grassmann coordinates $\theta_{\pm}$ but do not
have the site-dependence of the lattice.
This is because the SUSY invariance on the lattice will be lost 
if they depend on the site.
}
%
\begin{eqnarray}
v_{\pm}\ &\longrightarrow&\ 
  v_{\pm}(\theta_{\pm}) = v_{\pm} + \theta_{\pm}\zeta^{v}_{\pm} , \nonumber\\
m_{\pm}\ &\longrightarrow&\ 
  m_{\pm}(\theta_{\pm}) = m_{\pm} + \theta_{\pm}\zeta^{m}_{\pm} , \nonumber\\
\lambda_{\pm}\ &\longrightarrow&\ 
  \lambda_{\pm}(\theta_{\pm}) = \lambda_{\pm} + \theta_{\pm}\zeta^{\lambda}_{\pm}\, ,  
\label{nonrenormalization_07}
\end{eqnarray}
%
where the lowest components of $v_{\pm}(\theta_{\pm})$, $m_{\pm}(\theta_{\pm})$ and
$\lambda_{\pm}(\theta_{\pm})$ correspond to the original parameters
$v_{\pm}$, $m_{\pm}$ and $\lambda_{\pm}$.
Note that the tree action (\ref{nonrenormalization_01}) is still
supersymmetric under the replacement (\ref{nonrenormalization_07})
with the supersymmetry transformations:
%
\begin{eqnarray}
Q_{\pm} v_{\pm}(\theta_{\pm}) 
  &=& \frac{\partial}{\partial\theta_{\pm}} v_{\pm}(\theta_{\pm})\,, 
      \qquad \delta_{\mp} v_{\pm}(\theta_{\pm}) = 0\,,\nonumber\\
Q_{\pm} m_{\pm}(\theta_{\pm}) 
  &=& \frac{\partial}{\partial\theta_{\pm}} m_{\pm}(\theta_{\pm})\,, 
      \qquad \delta_{\mp} m_{\pm}(\theta_{\pm}) = 0\,,\nonumber\\
Q_{\pm} \lambda_{\pm}(\theta_{\pm}) 
  &=& \frac{\partial}{\partial\theta_{\pm}} \lambda_{\pm}(\theta_{\pm})\,,
      \qquad\  \delta_{\mp} \lambda_{\pm}(\theta_{\pm}) = 0\,.
\label{nonrenormalization_08}
\end{eqnarray}
%
Then, it is straightforward application of Proposition~\ref{prop4} to show that 
$\Delta\Gamma^{\footnotesize \textrm{eff}}_{\footnotesize \textrm{type-II}}$
 can be supersymmetric only if
$W_{+}^{\footnotesize \textrm{eff}}$ ($W_{-}^{\footnotesize \textrm{eff}}$)
is independent of 
$\Phi_{-}(\theta_{-}), v_{-}(\theta_{-}), m_{-}(\theta_{-}), \lambda_{-}(\theta_{-})$
($\Phi_{+}(\theta_{+}), v_{+}(\theta_{+}), m_{+}(\theta_{+})$, 
$\lambda_{+}(\theta_{+})$).

As a next step, we further 
restrict the form of the functions
$W_{\pm}^{\footnotesize \textrm{eff}}(\Phi_{\pm}; v_{\pm}, m_{\pm}, \lambda_{\pm})$
by assigning the fermion number $N_{F}$, the $U(1)$ and $U(1)_{R}$
charges to the fields, as listed in Table \ref{tab1.1}.
%
%
%
%
\begin{table}[!t]
\centering
\begin{tabular}[h]{c||c|c|c|c|c|c|c|c|c|c|c|c}
	\parbox[c][0.7cm][c]{0cm}{}%
	     & $\Psi_{\pm}$ & $\Phi_{\pm}$ & $v_{\pm}$ & $m_{\pm}$ & $\lambda_{\pm}$ & $\theta_{\pm}$ 
	 & $d\theta_{\pm}$ 
	 & $S$
	 & $\Delta\Gamma^{\footnotesize \textrm{eff}}_{\footnotesize \textrm{type\,II}}$
	 & $\frac{v_{\pm}\lambda_{\pm}}{(m_{\pm})^{2}}$
	 & $\frac{\lambda_{\pm}\Phi_{\pm}}{m_{\pm}}$
	 & $\frac{v_{\pm}}{m_{\pm}\Phi_{\pm}}$ \\   
	   \hline\hline
	\parbox[c][0.7cm][c]{0cm}{}%
	$N_{F}$ & $+1$ & $0$ & 0 & 0 & 0 & $+1$ & $-1$ & 0 & 0 & 0 & 0 & 0 \\ 
	   \hline
	\parbox[c][0.7cm][c]{0cm}{}%
	$U(1)$ & $\pm 1$ & $\pm 1$ & $\mp 1$ & $\mp 2$ & $\mp 3$ & 0 & 0 & 0 & 0 & 0 & 0 & 0\\ \hline
	\parbox[c][0.7cm][c]{0cm}{}%
	$U(1)_{\footnotesize \textrm{R}}$ 
	& 0 & $\pm 1$ & $\pm 1$ & 0 & $\mp 1$ & $\pm 1$ & $\mp 1$ & 0 & 0 & 0 & 0 & 0
\end{tabular}
\caption{%
the fermion number $N_{F}$, $U(1)$ and $U(1)_{\footnotesize \textrm{R}}$
charges
}
\label{tab1.1}
\end{table}
%
%
%
Then, 
$\Delta\Gamma^{\footnotesize \textrm{eff}}_{\footnotesize \textrm{type-II}}$
is found to be generally expressed as
%
\begin{eqnarray}
&&\Delta\Gamma^{\footnotesize \textrm{eff}}_{\footnotesize \textrm{type-II}}
      [\Psi_{\pm}, \Phi_{\pm}; v_{\pm}, m_{\pm}, \lambda_{\pm}] 
=\int d\theta_{-}d\theta_{+}
     \bigg\{ \theta_{-} 
             \langle\,\Psi_{+}\,,\,{\textstyle \frac{(m_{+})^{2}}{\lambda_{+}}}
             f_{+}\big({\textstyle  \frac{v_{+}\lambda_{+}}{(m_{+})^{2}}, 
                   \frac{\lambda_{+}\Phi_{+}}{m_{+}}}\big)\,\rangle 
                   \notag\\
&&\hspace{75mm}
           - \theta_{+} 
             \langle\,\Psi_{-}\,,\,{\textstyle \frac{(m_{-})^{2}}{\lambda_{-}}}
             f_{-}\big({\textstyle  \frac{v_{-}\lambda_{-}}{(m_{-})^{2}}, 
                   \frac{\lambda_{-}\Phi_{-}}{m_{-}}}\big)\,\rangle
             \bigg\},\hspace{6mm}
\label{nonrenormalization_09}
\end{eqnarray}
%
where $f_{\pm}(z_{\pm}, w_{\pm})$ are some holomorphic functions of the complex variables
$z_{\pm} = \frac{v_{\pm}\lambda_{\pm}}{(m_{\pm})^{2}}$ and
$w_{\pm} = \frac{\lambda_{\pm}\Phi_{\pm}}{m_{\pm}}$. 

\vspace{5mm}
In this perturbative calculation, we may expand the functions $f_{\pm}(z_{\pm}, w_{\pm})$ 
in powers of $z_{\pm}$ and $w_{\pm}$ as
%
\begin{eqnarray}
&&\Delta\Gamma^{\footnotesize \textrm{eff}}_{\footnotesize \textrm{type-II}}
      [\Psi_{\pm}, \Phi_{\pm}; v_{\pm}, m_{\pm}, \lambda_{\pm}] 
      \nonumber\\
&&\hspace{5mm} =\int d\theta_{-}d\theta_{+} \sum_{k}\sum_{l}
     \bigg\{ \theta_{-} \frac{(m_{+})^{2}}{\lambda_{+}}
             \,a_{kl}^{+}
             \bigg(\frac{v_{+}\lambda_{+}}{(m_{+})^{2}}\bigg)^{k}
             \bigg(\frac{\lambda_{+}}{m_{+}}\bigg)^{l}
             \langle\,\Psi_{+}\,,\,
             [\![\,\underbrace{\Phi_{+},\Phi_{+}, \cdots , \Phi_{+}}_{l}\,]\!]\,\rangle
             \nonumber\\
&&        \hspace{25mm}
            -\, \theta_{+} \frac{(m_{-})^{2}}{\lambda_{-}}
             \,a_{kl}^{-}
             \bigg(\frac{v_{-}\lambda_{-}}{(m_{-})^{2}}\bigg)^{k}
             \bigg(\frac{\lambda_{-}}{m_{-}}\bigg)^{l}
             \langle\,\Psi_{-}\,,\,
             [\![\,\underbrace{\Phi_{-},\Phi_{-}, \cdots , \Phi_{-}}_l\,]\!]\,\rangle
             \bigg\},\hspace{6mm}
\label{nonrenormalization_12}
\end{eqnarray}
%
\noindent
where $a^{\pm}_{kl}$ are some constant coefficients.  
The SUSY invariance then requires the CLR relations
%
\begin{eqnarray}
\langle\,\nabla\Phi_{\pm}\,,\, [\![\Phi_{\pm},\Phi_{\pm}, \cdots , \Phi_{\pm}]\!]\,\rangle
  = 0\,,
\label{nonrenormalization_13}
\end{eqnarray}
%
as they should be.
Since we are considering perturbation theory, in which the weak coupling limits of 
$v_{\pm}, \lambda_{\pm} \to 0$ are assumed to exist,
the powers of $k$ and $l$ should be restricted to 
%
\begin{eqnarray}
k \ge 0\,,\quad
l \ge 0\,,\quad
k+l \ge 1\,,
\label{nonrenormalization_14}
\end{eqnarray}
%
in order for 
$\Delta\Gamma^{\footnotesize \textrm{eff}}_{\footnotesize \textrm{type-II}}$
to be non-singular.
Note that the functions $f_{\pm}$ in (\ref{nonrenormalization_09}) 
are non-singular in the weak
coupling limits of $v_{\pm}, \lambda_{\pm} \to 0$ with (\ref{nonrenormalization_14}).

Let us next clarify what kind of diagrams lead to the terms given in 
(\ref{nonrenormalization_12}).
To this end, we use the topological relation for Feynman diagrams
%
\begin{eqnarray}
L = I - V_{1} - V_{3} + 1\,,
\label{nonrenormalization_15}
\end{eqnarray}
%
where $L, I, V_{1}$ and $V_{3}$ denote the numbers of loops, internal lines,
$v_{\pm}$-vertices and $\lambda_{\pm}$-vertices of a Feynman diagram, respectively.
Further, we have the relation
%
\begin{eqnarray}
V_{1} + 3V_{3} = E + 2I,
\label{nonrenormalization_16}
\end{eqnarray}
%
because we have one-point vertices $v_{\pm}$ and three-point vertices $\lambda_{\pm}$
in the tree action.
Here, $E$ denotes the number of external lines.
From (\ref{nonrenormalization_15}) and (\ref{nonrenormalization_16}),
we obtain
%
\begin{eqnarray}
L = - \frac{V_{1}}{2} + \frac{V_{3}}{2} - \frac{E}{2} + 1\,.
\label{nonrenormalization_17}
\end{eqnarray}
%
Since the values of $V_{1}$, $V_{3}$ and $E$ for each term in (\ref{nonrenormalization_12}) 
are given by $V_{1}=k$, $V_{3}=k+l-1$ and $E=l+1$,
corresponding diagrams turn out to be tree ones without loops, i.e.
%
\begin{eqnarray}
L = -\frac{k}{2} + \frac{k+l-1}{2} - \frac{l+1}{2} + 1 = 0\,.
\label{nonrenormalization_18}
\end{eqnarray}
%
Therefore, all the terms given in (\ref{nonrenormalization_12}) have to be 
excluded from 
$\Delta\Gamma^{\footnotesize \textrm{eff}}_{\footnotesize \textrm{type-II}}$
because the effective potential 
$\Delta\Gamma^{\footnotesize \textrm{eff}}_{\footnotesize \textrm{type-II}}$
consists of 1PI diagrams only.
This implies that there is no quantum correction to the tree type-II action 
$S_{\footnotesize \textrm{type-II}}$, i.e.
%
\begin{eqnarray}
\Gamma^{\textrm{\footnotesize eff}}_{\footnotesize \textrm{type-II}}
  = S_{\footnotesize \textrm{type-II}}\,.
\label{nonrenormalization_19}
\end{eqnarray}
%
This completes the perturbative proof of the nonrenormalization theorem on the lattice.

\vspace{3mm}

In the essential point of our proof, any nilpotent-SUSY invariant local functional
is  classified into type-I or type-II.  
In the addition,  possible  type-II local functionals are only 
terms so called as CLR-type.\footnote{
For massive perturbative calculations, major relations (\ref{nonrenormalization_15}), 
(\ref{nonrenormalization_16}),  (\ref{nonrenormalization_17}) and (\ref{nonrenormalization_18}) are unchanged.  
We simply replace  $f_{\pm}(w,z)$ with $f_{\pm}(w,z, m_+m_-)$ in (\ref{nonrenormalization_09}). 
As the result, we can obtain the same nonrenormalization theorem.}

\subsection{Consideration of nonperturbative  nonrenormalization property for type-II functional}
We briefly consider a non-perturbative justification beyond a perturbative proof. 
Let us assume that no massless mode appears even non-perturbatively in the massive theory. 
The assumption leads us that our theory is  sigularity-free at the origin in the coupling constant space with weak fields.
In our quantum mechanical model, the assumption seems natural.  But for higher dimensional cases, it we may need more careful treatment.

Any way, from the assumption,  the complex analysis in the previous subsection tells us that
the holomorphic functions $f_{\pm}(z_{\pm},w_{\pm})$  in (\ref{nonrenormalization_09}) 
equals to zero in a neighborhood  of the origin,
since the perturbative proof in the previous subsection holds there.
Then $f_{\pm}(z_{\pm},w_{\pm})$  vanish identically on the \textit{whole} complex plane because of the
analytical continuation or  the identity theorem  in  complex analysis. 
This suggests our nonrenormalization theorem holds nonperturbatively. 

%
\section{Summary and discussion}
In this article, we have constructed  a supersymmetric complex quantum mechanics model on lattice. The action is invariant under 
two nilpotent-SUSY transformations ($Q_{\pm}$) which form a maximal nilpotent subalgebra of full $N=4$ SUSY and they keep a holomorphic property.  
Furthermore, they enable us to study $Q_{\pm}$ cohomology exactly. 
 As the result of the analysis, we can classify  all local $Q_{\pm}$-invariant functionals  with $N_F=0$ into 
 type-I (such as kinetic terms) and type-II (such as a mass term including the Wilson term and interaction terms.) 
 The local functionals in  a nilpotent-SUSY invariant effective action  is also classified into the two types and 
 we have proved nonrenormalization theorem for the type-II local functionals at {\it any order of perturbative expansion} and
 {\it without taking continuum limit, namely with finite lattice constant}.  
This means that 
 we are able to realize more than one nilpotent-SUSY and the holomorphy   
 even in a regularized theory and this is extremely nontrivial result. 
  For nonperturbative justification of the nonrenormalization property of type-II functionals, we gave reasonable arguments.

The reasons why we succeeded to prove the theorem are (1) 
familiarity between  two nilpotent-SUSY transformations and holomorphy, (2) definition of local functionals, (3) existence of the CLR,
(4) cohomological analysis of the nilpotent SUSY. 
It should be noted that 
 our model is quite similar to  the $N=1,~D=4$ Wess-Zumino model which has a  $F$-term nonrenormalization theorem, 
 kinetic terms suffering with quantum corrections. 
There remain several issues worth investigation such as higher-dimensional extension, cohomological analysis of 
 local functionals with any fermion number and so on.

%
%
Finally, we would like to emphasize notable features of our lattice models.  
Note that supersymmetry in continuum theories is not the Lagrangian symmetry 
but the action one. 
That is, supersymmetric Lagrangians are invariant under 
supersymmetry transformations up to total divergences, in general. 
Our lattice model holds this property. Under the supersymmetry transformations 
(\ref{SUSYtransformation1}), 
our Lagrangian on lattice is not invariant but the action becomes invariant 
with the CLR by taking the summation over the lattice sites. 
Thus, our lattice model mimics an important feature of 
continuum supersymmetric theories. 
We would like to notice that other lattice supersymmetric models without 
the CLR do not possess this property and fail in inheriting a property 
with respect to translations in supersymmetry.

Another notable feature of our lattice model is as follows. 
Our lattice model has been shown to be cohomologically non-trivial. 
This property turns out to be crucial to prove the non-renormalization theorem 
for F-terms, as discussed in the manuscript. 
All cohomologically trivial terms can be written into exact forms and 
hence those terms are trivially invariant under nilpotent supersymmetry 
transformations without the summation over the lattice sites. 
On the other hand, every cohomologically non-trivial term cannot be written 
as any exact form and does vanish under nilpotent supersymmetry ones only after 
the summation over the lattice sites with the CLR. 
Thus the CLR plays an essential role in the non-triviality of cohomology 
for our lattice model. 
This has not been shared by other approaches and hence is an advantage of 
our lattice model with the CLR.
%
%

%
%
%
\section*{Acknowledgements}
This work is supported in part by the Grant-in-Aid for Scientific 
Research No.25287049 (M.K.), No.15K05055 (M.S.) and No.25400260 (H.S.)
by the Japanese Ministry of Education, Science, Sports and Culture.



%

\appendix

\section{Notations, H-representation for locality and some formulae}

A lattice space coordinate is expressed by an integer as 
$n, $ where 
$-N_L < n \le N_L$ and 
$2N_L$ is the lattice size.  The lattice constant $a$ is set to unity. 
A translationally-invariant  and local coefficient(TILC)  
is conveniently  expressed by a holomorphic function, H-representation, for example   
\begin{equation}
\tilde{A}(z_1,z_2,z_3)\equiv \sum_{k\ell m} A_{k \ell m n}z_1^{k-n}z_2^{\ell -n}z_3^{m-n} .
\end{equation}
The locality implies that $\tilde{A}(z_1,z_2,z_3)$ is holomorphic in a domain 
$
{\cal D} =\{1-\epsilon <|z_i| < 1+ \epsilon\, |\, \epsilon >0,~i=1,2,3  \} 
$.  Although we refer \cite{KSS-1} and \cite{KSS-2} for the detailed arguments on locality, 
we note that the meaning of locality include not only ultralocality but also exponential 
damping. 
In this article, a terminology, {\it local functional} or {\it functional with locality}, used as  a collection  of   
fields  with a local coefficient.  
A local difference operator $\nabla_{mn}$ in this paper 
is also a TILC 
\begin{equation}
\nabla_{mn}=\nabla(m-n),~\tilde{\nabla}(z)\equiv \sum_m z^m \nabla(m),
\end{equation}
\noindent
and $\tilde{\nabla}(z)$ is a holomorphic function in a domain $1-\epsilon <|z| < 1+\epsilon$ for small  $\epsilon >0$ 
with the property 
\begin{equation}
\tilde{\nabla}(z=1)=0
\end{equation}
\noindent
which corresponds to its vanishing property for  constant functions. 

The symbols $()$ and $\{\}$ for indices stand for  symmetrization and anti-symmetrization, respectively. 
 A hat symbol $\hat{}$ above an index in a sequence of indices means omission of 
 the index. For examples, $A_{(ab)}=\frac{1}{2}(A_{ab}+A_{ba}),~A_{\{ab\}}=\frac{1}{2}(A_{ab}-A_{ba}), \hat{a}bc=bc$ and $ab\hat{c}d=abd$.

The hermitian conjugations of fields are defined as 
\begin{eqnarray}
\left\{
\begin{array}{l}
\phi_{\pm}^{\dagger}=\phi_{\mp} \,, \\
F_{\pm}^{\dagger}=F_{\mp}  \,, \\
\chi_{\pm}^{\dagger}=\chi_{\mp} \,, \\
\bar{\chi}_{\pm}^{\dagger}=\bar{\chi}_{\mp} \,, \\
\end{array}\right.
\qquad\qquad
\left\{
\begin{array}{l}
\Phi_{\pm}^{\dagger}=\Phi_{\mp} \,, \\
\Upsilon_{\pm}^{\dagger}=\Upsilon_{\mp} \,, \\
\Psi_{\pm}^{\dagger}=\Psi_{\mp} \,, \\
S_{\pm}^{\dagger}=S_{\mp} \,, \\
\theta_{\pm}^{\dagger}=\theta_{\mp} \,. \\
\end{array} \right.
\end{eqnarray}
\noindent
From this hermiticity, we can show the reality of our action with (\ref{kinetic-term}), (\ref{mass-term}), (\ref{int-term}) ,
\begin{equation}
S\equiv S_0+S_m+S_{int} =S^{\dagger}. 
\end{equation}

We note that 
any two-point functions   with  translational invariance and locality commute with each other in the sense of matrices.  
\begin{eqnarray}
(AB)_{ik} &=& \sum_j A_{ij}B_{jk} = \sum_j A(i-j)B(j-k) \nonumber  \\
&=&\sum_{j'=i+k-j}B(i-j')A(j'-k) 
= \sum_{j'}B_{ij'}A_{j'k}=(BA)_{ik} .
\end{eqnarray}
Indeed, in the  real lattice space, any translationally-invariant and local 
two-point functions including the difference operator  $\nabla$ and the massive propagator $D$
  commute  with each other in the sense of matrices.

\section{On a solution of a linear  $\tilde{\nabla}$  equation 
for coefficients of functionals}
\setcounter{equation}{0}

In  proving a fundamental theorem on cohomology of a nilpotent SUSY, 
we need a general solution of a linear  equation for 
TILCs (translationally-invariant and local coefficients) with $\nabla$. 
In the linear equation of TILCs, a difference operator $\nabla$  and the coefficient can be 
expressed by holomorphic functions with many-variables,
namely holomorphic representation. 
The equation in question is typically  
\begin{equation}
\sum_{a=1}^M \tilde{A}_{a}(z_1,\cdots,z_N)\tilde{\nabla}(z_a)=0,
\label{L-equation}
\end{equation}
\noindent
for $M \leq  N$. 
Note that  
$\nabla$ has two site-indices  and the coefficient has $N+1$ site-indices 
 in lattice site representation. 

The solution of (\ref{L-equation}) can be generally written as
\begin{equation}
\tilde{A}_{a}(z_1,\cdots,z_N)=\sum_{b=1}^{M}\tilde{A}_{\{ab\}}(z_1,\cdots,z_N) 
\tilde{\nabla}(z_b),~\tilde{A}_{\{ab\}}(z_1,\cdots,z_N)= - \tilde{A}_{\{ba\}}(z_1\cdots z_N)
\label{sol}
\end{equation}
\noindent
where  $\tilde{A}_{\{ab\}}$ for $a,b=1,2,\cdots M$ are holomorphic functions in the domain ${\cal D}^N$.

\begin{proof}
We carry out our proof for the case where the $\tilde{\nabla}(z)$ has a finite number of multiple zeros,\footnote{Nonzero $\tilde{\nabla}(z)$ 
has no infinite numbers of zeros on the annulus ${\cal D}$ with a sufficiently small width. The reason is that the closure of 
the annulus domain is compact and the $\tilde{\nabla}(z)$ is a holomorphic function on the domain.} 
with e.g.~a doubling phenomena in mind. 
Let $z_k^{(0)}~(k=1,\cdots,\nu)$ 
denote the zeros of $\tilde{\nabla}(z)$,  where  $\nu$ is the number of the zeros.
We prove the above statement by induction.  
For $M=2$ case, (\ref{L-equation}) becomes 
\begin{equation}
\tilde{A}_1(z_1,\cdots,z_N)\tilde{\nabla}(z_1)+
\tilde{A}_2(z_1,\cdots,z_N)\tilde{\nabla}(z_2)=0
\label{L-equation-2}
\end{equation}
\noindent
where $N \geq 2$. By  reminding $\tilde{\nabla}(z_k^{(0)})=0$ and  
setting $z_2=z_k^{(0)}$ in (\ref{L-equation-2}),  we obtain 
\begin{equation}
\tilde{A}_1(z_1,z_2=z_k^{(0)},z_3,\cdots,z_N)\tilde{\nabla}(z_1)=0. 
\label{zero-1}
\end{equation}
\noindent
Since (\ref{zero-1}) is true on a complex $N$-dimensional domain 
$\{(z_1,\cdots,z_N)
\in \mathcal{D}^{N}
| z_1\ne z_i^{(0)},~i=1,\cdots,\nu \}$,  
the holomorphy of the coefficient $A_1$ leads us to 
\begin{equation}
\tilde{A}_1(z_1,z_2=z_k^{(0)},z_3,\cdots,z_N)=0. 
\label{add-1}
\end{equation}
\noindent
Since (\ref{add-1}) must hold  for every $k$,  
we can write $\tilde{A}_1(z_1,\cdots,z_N)$ into the form
\begin{equation}
\tilde{A}_1(z_1,\cdots,z_N)
=
\frac{\tilde{A}_1(z_1,\cdots,z_N)}{\tilde{\nabla}(z_2)}\,\tilde{\nabla}(z_2)
\equiv
\tilde{B}(z_1,\cdots,z_N)\tilde{\nabla}(z_2)
\label{A_1-2}
\end{equation}
\noindent
where $\tilde{B}$ is a 
holomorphic function in the domain ${\cal D}^N$
because $\tilde{A}_1(z_1,\cdots,z_N)/\tilde{\nabla}(z_2)$ has only 
removable singularities.
From (\ref{L-equation-2}), (\ref{A_1-2}) and 
the holomorphy of $\tilde{A}_2$,   we get  
\begin{equation}
\tilde{A}_2(z_1,\cdots,z_N)=-\tilde{B}(z_1,\cdots,z_N)\tilde{\nabla}(z_1) .
\label{A_2-2}
\end{equation}
\noindent
From (\ref{A_1-2}) and (\ref{A_2-2}), the following equation 
\begin{equation}
\tilde{A}_a(z_1,\cdots,z_N)= \sum_{b=1}^2 \tilde{A}_{\{ab\}}(z_1,\cdots,z_N) \tilde{\nabla}(z_b)
\end{equation}
\noindent
is obtained with $\tilde{A}_{\{ 12 \}}=\tilde{B}=-\tilde{A}_{\{21\}},\tilde{A}_{\{11\}}=\tilde{A}_{\{22\}}=0$ and we find that (\ref{sol}) indeed holds for $M=2$.  

Now let us assume that the above statement is true for $M=m$.  
Namely, for  a linear equation
\begin{equation}
\sum_{a=1}^m \tilde{A}^{(m)}_{a}(z_1,\cdots,z_N)\tilde{\nabla}(z_a)=0 
\qquad (m<N),
\label{L-equation-3}
\end{equation}
\noindent
the general solution is written as 
\begin{equation}
\tilde{A}^{(m)}_{a}(z_1,\cdots,z_N)=\sum_{b=1}^{m}\tilde{A}^{(m)}_{\{ab\}}(z_1,\cdots,z_N) 
\tilde{\nabla}(z_b),~\tilde{A}^{(m)}_{\{ab\}}(z_1,\cdots,z_N)= - \tilde{A}^{(m)}_{\{ba\}}(z_1\cdots z_N) .
\label{sol-m}
\end{equation}
Then we consider  $M=m+1$ case with $m+1<N$, 
\begin{equation}
\sum_{a=1}^{m}\tilde{A}_a^{(m+1)}(z_1,\cdots,z_N)\tilde{\nabla}(z_a)+ \tilde{A}_{m+1}^{(m+1)}(z_1,\cdots,z_N)\tilde{\nabla}(z_{m+1})=0. 
\label{L-equation-4}
\end{equation}
\noindent
From a holomorphy of $\tilde{A}_{m+1}^{(m+1)}$ and 
\begin{equation}
\tilde{A}_{m+1}^{(m+1)}(z_1=z_{k_1}^{(0)},z_2=z_{k_2}^{(0)},\cdots,z_m=z_{k_m}^{(0)},z_{m+1},\cdots,z_N)=0, 
\label{zero-2}
\end{equation}
\noindent
$\tilde{A}_{m+1}^{(m+1)}(z_1,\cdots,z_N)$ is shown to be expressed as 
\begin{equation}
\tilde{A}_{m+1}^{(m+1)}(z_1,\cdots,z_N)=\sum_{a=1}^m \tilde{B}_a^{(m+1)}(z_1,\cdots,z_N)\tilde{\nabla}(z_a) ,
\label{sol-m-1}
\end{equation}
\noindent
where $\tilde{B}_a^{(m+1)} (a=1,\cdots,m)$ are some holomorphic functions.  

As an example, we explain  the above result in the case of $m=2$  below, although  (\ref{sol-m-1}) generally holds.   
For  $m=2$,\ 
 (\ref{zero-2}) becomes 
\begin{equation}
\tilde{A}_{3}^{(3)}(z_1=z_{k_1}^{(0)},z_2=z_{k_2}^{(0)},z_3,\cdots,z_N)= 0. 
\end{equation}
\noindent
Note that the two  zeros $z_{k_1}^{(0)},z_{k_2}^{(0)}$  are not necessarily  the same. 
We substitute this to the following identity 
\begin{eqnarray}
&&\hspace{-8mm}
  \tilde{A}_{3}^{(3)}(z_1,z_2,z_3,\cdots,z_{N}) 
 = \frac{\tilde{A}_{3}^{(3)}(z_1,z_2,z_3,\cdots,z_{N})
    -{\textstyle\sum\limits_{k_{1}=1}^{\nu}}
    f_{k_{1}}(z_1)\tilde{A}_{3}^{(3)}(z_{k_1}^{(0)},z_2,z_3,\cdots,z_{N})}{
    \tilde{\nabla}(z_1)} \tilde{\nabla}(z_1) \nonumber \\
&&\hspace{-4mm}
  + \sum_{k_1=1}^{\nu}\frac{f_{k_1}(z_1)
  \tilde{A}_{3}^{(3)}(z_{k_1}^{(0)},z_2,z_{3},\cdots,z_{N})
   -{\textstyle\sum\limits_{k_{2}=1}^{\nu}}
   f_{k_1}(z_1)f_{k_2}(z_2)
   \tilde{A}_{3}^{(3)}(z_{k_1}^{(0)},z_{k_2}^{(0)},z_3,\cdots,z_{N})}{
   \tilde{\nabla}(z_2)}\tilde{\nabla}(z_2) \nonumber \\
&&\hspace{-4mm}
   + \sum_{k_1,k_2=1}^{\nu}f_{k_1}(z_1)f_{k_2}(z_2)
   \tilde{A}_{3}^{(3)}(z_{k_1}^{(0)},z_{k_2}^{(0)},z_3,\cdots,z_{N}) .
\end{eqnarray}
\noindent
Here the Lagrange's interpolation functions
\begin{equation}
f_k(z)\equiv \frac{\prod_{j\ne k}(z-z_{j}^{(0)})}{\prod_{j\ne k} (z_{k}^{(0)}- z_{j}^{(0)}) },~f_k(z_{j}^{(0)})=\delta_{kj}  
\end{equation}
\noindent
are used. The result is 
\begin{eqnarray}
&&\hspace{-8mm}
  \tilde{A}_{3}^{(3)}(z_1,z_2,z_3,\cdots,z_{N}) 
 = \frac{\tilde{A}_{3}^{(3)}(z_1,z_2,z_3,\cdots,z_{N})
    -{\textstyle\sum\limits_{k_{1}=1}^{\nu}}
   f_{k_1}(z_1) \tilde{A}_{3}^{(3)}(z_{k_1}^{(0)},z_2,z_3,\cdots,z_{N})}{
    \tilde{\nabla}(z_1)} \tilde{\nabla}(z_1) \nonumber \\
&&\hspace{-8mm}
  + \sum_{k_1=1}^{\nu}\frac{f_{k_1}(z_1)
  \tilde{A}_{3}^{(3)}(z_{k_1}^{(0)},z_2,z_{3},\cdots,z_{N})
   -{\textstyle\sum\limits_{k_{2}=1}^{\nu}}
   f_{k_1}(z_1)f_{k_2}(z_2)
   \tilde{A}_{3}^{(3)}(z_{k_1}^{(0)},z_{k_2}^{(0)},z_3,\cdots,z_{N})}{
   \tilde{\nabla}(z_2)}\tilde{\nabla}(z_2) .\nonumber\\
   \label{nabla-1}
\end{eqnarray}
\noindent
Since  $z_1=z_{k_1}^{(0)}$ and $z_2=z_{k_2}^{(0)}$ are removable singularities on the right hand side of (\ref{nabla-1}) and 
the multiplication  of $f_k(z)$ for holomorphic functions does not change the holomorphy on the annulus,  
we obtain 
\begin{equation}
\tilde{A}_{3}^{(3)}(z_1,z_2,z_3,\cdots,z_N)=\sum_{a=1}^2 \tilde{B}_a^{(3)}(z_1,z_2,z_3,\cdots,z_N)
\tilde{\nabla}(z_a) ,
\label{sol-m-2}
\end{equation}
\noindent
where $\tilde{B}_a^{(3)}$ are holomorphic functions in ${\cal D}^3$. 

From (\ref{sol-m-1}), (\ref{L-equation-4}) can be regarded as an inhomogeneous linear equation for 
$\tilde{A}_a^{(m+1)}$. Thus, the  general solution for the equation can be expressed as 
\begin{eqnarray}
&&\hspace{-10mm} 
  \tilde{A}_{a}^{(m+1)}(z_1,\cdots,z_N)
   =-\tilde{B}_a^{(m+1)}(z_1,\cdots,z_N)\tilde{\nabla}(z_{m+1})
    + \sum_{b=1}^m\tilde{A}_{\{ab\}}^{(m+1)}(z_1,\cdots,z_N) \tilde{\nabla}(z_b) ,
    \nonumber\\
&&\hspace{90mm} (a=1,2,\cdots,m)  
\end{eqnarray}
\noindent
where 
the 1st term on the right-hand-side gives a special solution to 
(\ref{L-equation-4}) and the 2nd term corresponds to the general solution
for a homogeneous equation
${\textstyle\sum\limits_{a=1}^{m}}\tilde{A}_a^{(m+1)}(z_1,\cdots,z_N)
\tilde{\nabla}(z_a)=0$
(see (\ref{sol-m})).
%
By defining 
\begin{eqnarray}
%
%
\tilde{A}_{\{m+1 a\}}^{(m+1)}(z_1,\cdots,z_N)& \equiv &\tilde{B}_{a}^{(m+1)}(z_1,\cdots,z_N) \nonumber \\
\tilde{A}_{\{a m+1\}}^{(m+1)}(z_1,\cdots,z_N) & \equiv & -\tilde{B}_{a}^{(m+1)}(z_1,\cdots,z_N) \nonumber \\
\tilde{A}_{\{m+1 m+1\}}^{(m+1)} (z_1,\cdots,z_N) & \equiv &  0 ,
\end{eqnarray}
\noindent
we obtain the general solution (\ref{sol}) for $M=m+1$.

\end{proof}

By repeating the above argument, the following corollary is immediately obtained. 
If 
\begin{equation}
\sum_{a_1=1}^M\tilde{A}_{\{a_1 \cdots  a_{n}\}}(z_1,\cdots,z_N)\tilde{\nabla}(z_{a_{1}})=0, 
\end{equation}
\noindent
for $M\leq N$, 
then 
\begin{equation}
\tilde{A}_{\{a_1 \cdots a_{n}\}}(z_1,\cdots,z_N)
=\sum_{a_{n+1}=1}^M\tilde{A}_{\{a_1 \cdots a_{n+1} \}}(z_1,\cdots,z_N)\tilde{\nabla}(z_{a_{n+1}}).
\end{equation}
Here all $\tilde{A}$ are holomorphic functions in a domain ${\cal D}^N$.  

In applying these results to the proof of the fundamental theorem of $Q_-$-cohomology, we need the lattice site representation 
in which a linear equation for TILC 
\begin{equation}
\sum_{a=1}^M\sum_{m} B_{k,m,n_1\cdots \hat{n}_a \cdots n_N}\nabla_{m,n_a} =0
\end{equation}
\noindent 
has a general  translationally invariant and local  solution 
\begin{equation}
B_{k,m,n_1\cdots \hat{n}_a \cdots n_N}=\sum_{b=1,b\ne a}^M\sum_{m'} C_{k,\{mm'\},n_1\cdots \hat{n}_a \cdots \hat{n}_b
\cdots n_N}\nabla_{m',n_b} ,
\label{sol-site}
\end{equation}
\noindent
where $C$'s are some TILC.
Although the index $k$ seems redundant, at least one extra index is necessary for writing down corresponding lattice site representation, since the number of independent indices are one less than the total number of indices for the translationally invariant coefficient.
%


%
\section{Proof of fundamental theorem on the cohomology of nilpotent SUSY}
\setcounter{equation}{0}

Since the analysis of $-$type fields is similar to that of  $+$type fields,   we concentrate on only $+$ fields ($+$type functionals) and $Q_-$ in this appendix. 
Thus,  we omit the subscript of fields  as $\phi(=\phi_+), \chi(=\chi_+), F(=F_+),\bar{\chi}(=\bar{\chi}_+)$. 
We set the $U(1)_R$ charge $R$ of  ${\cal O}$ as  $R=n_+-1-2K$ where $n_+$ and $K$ are the number of $+$type fields and  
the combined number of $F$ and  $\bar{\chi}$, respectively.  
When $K=0$,  namely $R=R_{\rm max}\equiv n_+-1$, 
the functional  ${\cal O} = \sum C \chi \phi\cdots \phi$ is called as a CLR term with  $N_F=1$. 
The $N$ is defined as 
the total number of $\phi$ and  ${\chi}$.  $K$ and $N$ are conserved  under $Q_-$ transformation (\ref{SUSYtrans-2}).


\textit{Fundamental theorem  of  $Q_{-}$-cohomology} says 
if    $Q_-{\cal O}=0$ for a $+$type local functional  ${\cal O}$ in a sector of $N_F=1$,  in the case of  $R < R_{\rm max}$, 
${\cal O}=Q_{-}{\cal P}$ with a local functional ${\cal P}$, and in the case of $R=R_{\rm max}$,     
${\cal O}$ can be a CLR term with $N_F=1$ up to some $Q_-$-exact local functionals.  


This theorem leads us that  CLR terms ($R=R_{\rm max}$)  are only  nontrivial cohomology candidates 
 in the fermion number 1 sector.  Namely, it is a fundamental theorem 
 on cohomology of nilpotent SUSY. 
This theorem can be proved for both original fields and  superfields although we carry out 
for original fields here.


\begin{proof}

Since $K,N$ are conserved numbers under $Q_-$-transformation,  it is sufficient to   consider 
the following functionals 
as  fermion number +1 translationally invariant, local and general functionals%
\begin{eqnarray}
{\cal O}(K,N)=\sum_{\bm{k},\bm{\ell},\bm{m},\bm{n}}\sum_{p=0}^{\min(K,N)} 
B^{(p)}_{(k_1\cdots k_{K-p})\{\ell_1\cdots\ell_p\}\{m_1\cdots m_{p+1}\}
(n_1\cdots n_{N-p})} \nonumber \\
\times F_{k_1}\cdots  F_{k_{K-p}}
 \bar{\chi}_{\ell_1}\cdots \bar{\chi}_{\ell_{p}}  \chi_{m_1}\cdots \chi_{m_{p+1}}  \phi_{n_1}\cdots \phi_{n_{N-p}}  .
 \label{F=1 functional}
\end{eqnarray}
\noindent
where $B^{(p)}$ is a TILC and bold indices $\bm{k},\bm{\ell},\bm{m},\bm{n}$ stand for multi indices
\begin{eqnarray}
\bm{k}&\equiv& k_1,\cdots,k_{K-p},\quad
\bm{\ell}\equiv \ell_1,\cdots,\ell_{p}, \nonumber \\
\bm{m} &\equiv&  m_1,\cdots,m_{p+1},\quad
\bm{n} \equiv  n_1,\cdots,n_{N-p}.
\end{eqnarray}

 $B^{(\min{K,N})+1}\equiv 0$ is set for convenience,   
We impose the $Q_-$-invariant (closed form) condition 
\begin{equation}
Q_- {\cal O}(K,N)=0
\label{closed condition}
\end{equation}
\noindent
on ${\cal O}$.  Using a single dot-symbol for absence of the corresponding  index such as $\{ \cdot\}$,   
the condition (\ref{closed condition})  is translated into the condition on the TILCs,
\begin{equation}
\sum_{j=1}^{N+1} \sum_{m}B^{(0)}_{(k_1\cdots k_{K})\{\cdot\}\{m\}
(n_1\cdots  \hat{n}_j \cdots  n_{N+1})}  \nabla_{m n_j}   =0 ,
 \label{closed condition 1}   
\end{equation}
for $p=0$ and
\begin{eqnarray}
 \frac{p+1}{N-p+1}
 \sum_{j=1}^{N-p+1} \sum_{m}
B^{(p)}_{(k_1\cdots k_{K-p})\{\ell_1\cdots\ell_p\}\{m_1\cdots m_p m\}(n_1\cdots  \hat{n}_j \cdots n_{N-p+1})}
\nabla_{m n_j} \nonumber \\
+  \frac{K-p+1}{p}
\sum_{j=1}^{p} (-1)^j \sum_{k}
B^{(p-1)}_{(k_1\cdots k_{K-p}k)\{\ell_1\cdots \hat{\ell}_j \cdots  \ell_p\}\{m_1\cdots m_p\}
(n_1\cdots n_{N-p+1})}\nabla_{k\ell_j}=0  ,
\label{closed condition 2}
\end{eqnarray}
for  $1\le p \le \min(K,N)$,
\noindent
where we used a hat ($~\hat{}~$) symbol for absent index. 
For $p=\min(K,N)+1=N+1$ ($K>N$case), (\ref{closed condition 2}) becomes
\begin{equation}
\sum_{j=1}^{N+1} \sum_{k}(-1)^j
B^{(N)}_{(k_1\cdots k_{K-N-1}k)\{\ell_1\cdots  \hat{\ell}_j \cdots\ell_{N+1}\}\{m_1\cdots m_{N+1}\}(\cdot)} \nabla_{k\ell_j} =0.
\label{closed condition 3}
\end{equation}
\noindent
In the case of  $p=\min(K,N)+1=K+1$ ($K\le N$ case),   there is no extra condition  such as (\ref{closed condition 3})
for  $B^{(K)}$, since  $B^{(K+1)}=0$ and  $K-p+1=0$  in (\ref{closed condition 2}).  
Note that  (\ref{closed condition 2}) can be solved as inhomogeneous linear equations for $B^{(p)}$ 
except for $K=0$.  

Then we discuss solutions of the above conditions for $K=0$ and $K\ge 1$ cases separately:
\begin{enumerate}
 \item  $K \ge 1$ case 
 
To express a general solution, we introduce  the following TILCs
 \begin{equation}
C^{\{p\}}_{(k_1\cdots k_{K-p+1})\{\ell_1\cdots \ell_{p-1}\}\{m_1\cdots m_{p+1}\}(n_1\cdots n_{N-p})} ,
\end{equation}
\noindent
and
\begin{equation}
C^{\{0\}}=0 .
\end{equation}
\noindent
From the symmetric property of $C^{\{p\}}$, the coefficient satisfies  the following properties, 

\begin{equation}
\sum_{m,n}\sum_{i,j=1,i\ne j}^{N-p+2}
C^{\{p\}}_{(k_1\cdots k_{K-p+1})\{\ell_1\cdots \ell_{p-1}\}\{m_1\cdots m_{p-1}mn\}(n_1\cdots \hat{n}_i \cdots  \hat{n}_{j} \cdots n_{N-p+2})} 
\nabla_{mn_i}  \nabla_{n n_j}  = 0 ,
\label{prop-1}
\end{equation}
\begin{equation}
\sum_{k,\ell}\sum_{i,j=1,i\ne j}^{p+1}(-1)^{i+j}
C^{\{p\}} _{(k_1\cdots k_{K-p-1}k\ell)\{\ell_1\cdots \hat{\ell}_i \cdots \hat{\ell}_j \cdots \ell_{p+1}\}\{m_1\cdots m_{p+1}\}(n_1 \cdots n_{N-p})} \nabla_{k\ell_i} 
 \nabla_{\ell \ell_{j}}  = 0 ,
\label{prop-2}
\end{equation}
\begin{eqnarray}
&&\sum_{k,m}\sum_{i=1}^{N-p}\sum_{j=1}^{p+1} (-1)^j
C^{\{p+1\}} _{(k_1\cdots k_{K-p-1}k)\{\ell_1\cdots  \hat{\ell}_j \cdots \ell_{p+1}\}\{m_1\cdots m_{p+1}m\}(n_1 \cdots  \hat{n}_i \cdots n_{N-p})} 
\nabla_{k\ell_j}  \nabla_{mn_i}  \nonumber \\
&=&\sum_{k,m}\sum_{i=1}^{p+1}\sum_{j=1}^{N-p}(-1)^i
C^{\{p+1\}}_{(k_1\cdots k_{K-p-1}k)\{\ell_1\cdots \hat{\ell}_i  \cdots \ell_{p+1}\}\{m_1\cdots m_{p+1}m\}(n_1 \cdots  \hat{n}_j \cdots n_{N-p})}  
 \nabla_{mn_j}  \nabla_{k\ell_i}.  \nonumber \\
\label{prop-3}
\end{eqnarray}
From  a solution  (\ref{sol-site}),  
the general solution of TILC
$B^{(0)}$ for (\ref{closed condition 1})
is written as 
\begin{equation}
B^{(0)}_{(k_1\cdots k_{K}) \{\cdot\}\{m\}
(n_1\cdots  \hat{n}_j \cdots n_{N+1})}=
\sum_{i=1,i\ne j}^{N+1}\sum_{m'} C^{\{1\}}_{(k_1\cdots k_{K})\{\cdot\} \{ m m' \}
(n_1\cdots \hat{n}_i \cdots \hat{n}_j \cdots n_{N+1})}
\nabla_{m' n_i} .
\label{general-B}
\end{equation}
\noindent 
A general solution for $B^{(1)}$ and $B^{(2)}$ are  
\begin{eqnarray}
&&B^{(1)}_{(k_1\cdots k_{K-1})\{\ell\}\{m_1m_{2}\}
(n_1\cdots n_{N-1})} \nonumber \\
&=&\frac{KN}{1\cdot2}\sum_{k'} C^{\{1\}}_{(k_1\cdots k_{K-1}k') \{\cdot\}\{ m_1 m_2 \}
(n_1\cdots  n_{N-1})}
\nabla_{k'\ell}  \nonumber \\
&+& \sum_{m'}\sum_{j=1}^{N-1} C^{\{2\}}_{(k_1\cdots k_{K-1})\{\ell\}\{m_1m_2m'\}(n_1\cdots \hat{n}_j  \cdots n_{N-1})} \nabla_{m'n_j}
\end{eqnarray}
\noindent
and
\begin{eqnarray}
&&B^{(2)}_{(k_1\cdots k_{K-2})\{\ell_1\ell_2\}\{m_1m_2 m_{3}\}
(n_1\cdots n_{N-2})} \nonumber \\
&=&-\frac{(K-1)(N-1)}{2\cdot3} \sum_{k'}\sum_{i=1}^2  (-1)^i
C^{\{2\}}_{(k_1\cdots k_{K-2}k')\{\ell(\hat{\ell}_i)\}\{m_1m_2m_3\}(n_1\cdots  n_{N-2})}\nabla_{k'\ell_i} \nonumber \\
&+& \sum_{m'}\sum_{j=1}^{N-2}C^{\{3\}} _{(k_1\cdots k_{K-2})\{\ell_1\ell_2\}\{m_1m_2m_3m'\}(n_1\cdots \hat{n}_j \cdots n_{N-2})} \nabla_{m'n_j},
\end{eqnarray}
\noindent
where we used    (\ref{sol-site}),  (\ref{prop-1}), (\ref{prop-2}), and (\ref{prop-3}) and $\ell(\hat{\ell}_i)$ means  $\ell_2$ for $i=1$ and  $\ell_1$ for $i=2$.
Consequently, we get a general solution for $B^{(p)}$
\begin{eqnarray}
&&B^{(p)}_{(k_1\cdots k_{K-p})\{\ell_1\cdots\ell_p\}\{m_1\cdots m_{p+1}\}
(n_1\cdots n_{N-p})} \nonumber \\
&=&-\frac{(K-p+1)(N-p+1)}{p(p+1)} \sum_{k'} \sum_{i=1}^{p}  (-1)^i
C^{\{p\}}_{(k_1\cdots k_{K-p}k')\{\ell_1\cdots  \hat{\ell}_i  \cdots \ell_{p}\}\{m_1\cdots m_{p+1}\}(n_1\cdots n_{N-p})}    \nabla_{k'\ell_i} \nonumber \\
&+& \sum_{m'} \sum_{j=1}^{N-p}  
C^{\{p+1\}} _{(k_1\cdots k_{K-p})\{\ell_1\cdots \ell_{p}\}\{m_1\cdots m_{p+1}m'\}(n_1\cdots  \hat{n}_j \cdots n_{N-p})}  \nabla_{m'n_j} .
\label{p-solution}
\end{eqnarray}
\noindent
There remains an extra condition (\ref{closed condition 3}).  
For $K > N$ and $p=N$,   the condition (\ref{closed condition 3}) reads 
\begin{eqnarray}
&&B^{(N)}_{(k_1\cdots k_{K-N})\{\ell_1\cdots\ell_N\}\{m_1\cdots m_{N+1}\}(\cdot)} \nonumber \\
&=&-\frac{(K-N+1)}{N(N+1)}\sum_{k'} \sum_{i=1}^{N}  (-1)^i
C^{\{N\}}_{(k_1\cdots k_{K-N}k')\{\ell_1\cdots  \hat{\ell}_i  \cdots \ell_{N}\}\{m_1\cdots m_{N+1}\}(\cdot)}    \nabla_{k'\ell_i}.  \nonumber \\
&&
\label{N-solution}
\end{eqnarray}
\noindent
By directly solving (\ref{closed condition 1}), we also have
\begin{eqnarray}
B^{(0)}=0 
\label{B0}
\end{eqnarray}
\noindent
for $N=0$. 
It is consistent with $N=0$ case of (\ref{N-solution}).  

Finally, from (\ref{F=1 functional}), (\ref{p-solution}) and  (\ref{N-solution}),  we can generally obtain an exact form
\begin{eqnarray}
{\cal O}(K,N)&=&Q_-\Big( \sum_{\bm{k},\bm{\ell},\bm{m}',\bm{n}'} \sum_{p=0}^{\min(K,N-1)}  
 \frac{i(N-p)}{(p+2)}  C^{ \{ p+1\}}_{(\bm{k})\{\bm{\ell}\}
\{\bm{m}'\}(\bm{n}')}  \nonumber \\
&&\times F_{k_1}\cdots F_{k_{K-p}}
 \bar{\chi}_{\ell_1}\cdots  \bar{\chi}_{\ell_p}
  {\chi}_{m_1}\cdots  {\chi}_{m_{p+2}}
 \phi_{n_1}\cdots \phi_{n_{N-p-1}}
\Big).
\end{eqnarray}
\noindent
where $\bm{m}'\equiv m_1,\cdots, m_{p+2},~\bm{n}'\equiv n_1,\cdots, n_{N-p-1}$. 
In $N=0$ case,  there is no $Q_-$-invariant translationally invariant and local functional, which is consistent with (\ref{B0}).

 \item  $K=0$ case

In  $K=0$, the functional can be written as 
\begin{equation}
{\cal O}(0,N)=\sum_{m_1,n_1,\cdots,n_N}B^{(0)}_{(\cdot)\{\cdot\}\{m_1\}(n_1\cdots n_{N})}   \chi_{m_1}\phi_{n_1}\cdots \phi_{n_N} 
\label{CLR-apndx}
\end{equation}
\noindent
and the $Q_-$-invariant condition corresponds to a cyclic Leibniz rule(CLR) 
\begin{equation}
\sum_{a=1}^{N+1}\sum_{n_1,\cdots,n_{N+1} }B^{(0)}_{(\cdot)\{\cdot\}\{n_{a}\}(n_1\cdots n_{a-1}n_{a+1}\cdots n_{N+1})}  
\phi_{n_1}\cdots \phi_{n_{a-1}}   (\nabla\phi)_{n_{a}}  \phi_{n_{a+1}}\cdots \phi_{n_{N+1}} =0. 
\label{CLR-apndx-2}
\end{equation}
\noindent
This term (\ref{CLR-apndx}) is just what is called {\it CLR term with $N_F=1$} in section 3. 
For the coefficient $B^{(0)}$ to be proportional to $\nabla$, at least one irrelevant index is necessary, just  as the last comment of appendix B.  
But all $N+1$ indices of $B^{(0)}$ are relevant for the contraction with index of $\nabla$, thereby the previous argument does not apply here.
So it implies that $B^{(0)}$ in (\ref{CLR-apndx})
is not always  written like (\ref{general-B}), i.e.~exact form in $K=0$.   
\end{enumerate}
\end{proof}



\end{document}